\theoremstyle{plain}
\newtheorem{theorem}{Theorem}[section]
\newtheorem{proposition}[theorem]{Proposition}
\theoremstyle{definition}
\newtheorem{definition}[theorem]{Definition}
\theoremstyle{remark}
\icmltitlerunning{LAGMA: LAtent Goal-guided Multi-Agent Reinforcement Learning}
\begin{document}

\twocolumn[
\icmltitle{LAGMA: LAtent Goal-guided Multi-Agent Reinforcement Learning}

% It is OKAY to include author information, even for blind
% submissions: the style file will automatically remove it for you
% unless you've provided the [accepted] option to the icml2024
% package.

% List of affiliations: The first argument should be a (short)
% identifier you will use later to specify author affiliations
% Academic affiliations should list Department, University, City, Region, Country
% Industry affiliations should list Company, City, Region, Country

% You can specify symbols, otherwise they are numbered in order.
% Ideally, you should not use this facility. Affiliations will be numbered
% in order of appearance and this is the preferred way.
\icmlsetsymbol{equal}{*}

\begin{icmlauthorlist}
\icmlauthor{Hyungho Na}{yyy}
\icmlauthor{Il-Chul Moon}{yyy,comp}
%\icmlauthor{}{sch}
%\icmlauthor{}{sch}
\end{icmlauthorlist}

\icmlaffiliation{yyy}{Korea Advanced Institute of Science and Technology (KAIST), Daejeon 34141, Republic of Korea.}
\icmlaffiliation{comp}{summary.ai, Daejeon, Republic of Korea}
%\icmlaffiliation{sch}{School of ZZZ, Institute of WWW, Location, Country}

\icmlcorrespondingauthor{Hyungho Na}{gudgh723@gmail.com}
\icmlcorrespondingauthor{Il-Chul Moon}{icmoon@kaist.ac.kr}

% You may provide any keywords that you
% find helpful for describing your paper; these are used to populate
% the "keywords" metadata in the PDF but will not be shown in the document
\icmlkeywords{Machine Learning, ICML}

\vskip 0.3in
]

% this must go after the closing bracket ] following \twocolumn[ ...

% This command actually creates the footnote in the first column
% listing the affiliations and the copyright notice.
% The command takes one argument, which is text to display at the start of the footnote.
% The \icmlEqualContribution command is standard text for equal contribution.
% Remove it (just {}) if you do not need this facility.

\printAffiliationsAndNotice{}  % leave blank if no need to mention equal contribution
%\printAffiliationsAndNotice{\icmlEqualContribution} % otherwise use the standard text.

\begin{abstract}
In cooperative multi-agent reinforcement learning (MARL), agents collaborate to achieve common goals, such as defeating enemies and scoring a goal.
However, learning goal-reaching paths toward such a semantic goal takes a considerable amount of time in complex tasks and the trained model often fails to find such paths. %and even fails in complex cooperative tasks with dense rewards or in sparse reward settings.
%Goal-conditioned RL generates a goal-reaching path that consists of prominent subgoals within the path so that an agent can learn a promising path toward the target goal. 
%Such a learning scheme has been less studied in MARL due to the lack of an explicit goal, partial observability, and decentralized execution. 
To address this, we present \textbf{LA}tent \textbf{G}oal-guided \textbf{M}ulti-\textbf{A}gent reinforcement learning (\textbf{LAGMA}), which generates a goal-reaching trajectory in latent space and provides a latent goal-guided incentive to transitions toward this reference trajectory. % during centralized training.
LAGMA consists of three major components: (a) quantized latent space constructed via a modified VQ-VAE for efficient sample utilization, (b) goal-reaching trajectory generation via extended VQ codebook, and (c) latent goal-guided intrinsic reward generation to encourage transitions towards the sampled goal-reaching path. 
The proposed method is evaluated by StarCraft II with both dense and sparse reward settings and Google Research Football. Empirical results show further performance improvement over state-of-the-art baselines.

\end{abstract}
\section{Introduction}
\label{sec:introduction}

%In recent times, cooperative Multi-Agent Reinforcement Learning (MARL) has found application in various domains such as traffic control \cite{wiering2000multi}, resource allocation \cite{dandanov2017dynamic}, robot path planning \cite{wang2020mobile}, and production systems \cite{dittrich2020cooperative}. Despite its successful deployment in these areas, cooperative MARL encounters difficulties in achieving effective coordination among multiple agents due to challenges arising from partial observability and inter-agent interactions during the training process. \\

Centralized training and decentralized execution (CTDE) paradigm \cite{oliehoek2008optimal,gupta2017cooperative} especially with value factorization framework \citep{sunehag2017value,rashid2018qmix,wang2020qplex} has shown its success on various cooperative multi-agent tasks \cite{lowe2017multi,samvelyan2019starcraft}. However, in more complex tasks with dense reward settings, such as super hard maps in  StarCraft II Multi-agent Challenge (SMAC) \cite{samvelyan2019starcraft} or in sparse reward settings, as well as Google Research Football (GRF) \cite{kurach2020google}; learning optimal policy takes long time, and trained models even fail to achieve a common goal, such as destroying all enemies in SMAC or scoring a goal in GRF. Thus, researchers focus on sample efficiency to expedite training \cite{zheng2021episodic} and encourage committed exploration \cite{mahajan2019maven,yang2019hierarchical,wang2019influence}.

To enhance sample efficiency during training, state space abstraction has been introduced in both model-based \cite{jiang2015abstraction,zhu2021deepmemory,hafner2020mastering} and model-free settings \cite{grzes2008multigrid,tang2020discretizing,li2023neural}. Such sample efficiency can be more important in sparse reward settings since trajectories in a replay buffer rarely experience positive reward signals. However, such methods have been studied within a single-agent task without expanding to multi-agent settings.

To encourage committed exploration, goal-conditioned reinforcement learning (GCRL) \cite{kaelbling1993learning,schaul2015universal,andrychowicz2017hindsight} has been widely adopted in a single agent task, such as complex path finding with a sparse reward \cite{nasiriany2019planning,zhang2020generating,chane2021goal,kim2023imitating,lee2023cqm}. However, GCRL concept has also been limitedly applied to multi-agent reinforcement learning (MARL) tasks since there are various difficulties: 1) a goal is not explicitly known, only a semantic goal can be found during training by reward signal; 2) partial observability and decentralized execution in MARL makes impossible to utilize path planning with global information during execution, only allowing such planning during centralized training; 3) most MARL tasks seek not the shortest path, but the coordinated trajectory, which renders single-agent path planning in GCRL be too simplistic in MARL tasks. 
%Either via intrinsic reward or reward signal from highsight experience replay. Inspired by such methods, we present LAGMA. As one way of state abstraction. As planning common goal is impossible during execution phase, we develop a goal-trajectory generation methods during cetralized training.

Motivated by methods employed in single-agent tasks, we consider a general cooperative MARL problem as finding trajectories toward semantic goals in latent space. %To this end, we first define the major predicaments as follows: (1) how to construct embedding space in a sample efficient way? (2) how to generate a goal-reaching trajectory during training as a reference to follow? (3) how to incentivize transitions toward the generated goal-reaching trajectory?

\textbf{Contribution.} This paper presents \textbf{LA}tent \textbf{G}oal-guided \textbf{M}ulti-\textbf{A}gent reinforcement learning (\textbf{LAGMA}). LAGMA generates a goal-reaching trajectory in latent space and provides a latent goal-guided incentive to transition toward this reference trajectory during centralized training.
\begin{itemize}
\vspace{-0.1in}
\item \textbf{Modified VQ-VAE for quantized embedding space construction:} 
%To increase sample efficiency properly, scalable than grid-based abstraction method.
As one measure of efficient sample utilization, we use Vector Quantized-Variational Autoencoder(VQ-VAE) \cite{van2017neural} which projects states to a quantized vector space so that a common latent can be used as a representative for a wide range of embedding space. However, state distributions in high dimensional MARL tasks are quite limited to small feasible subspace unlike image generation tasks, whose inputs or states often utilize a full state space. In such a case, only a few quantized vectors are utilized throughout training when adopting the original VQ-VAE. To make quantized embedding vectors distributed properly over the embedding space of feasible states, we propose a modified learning framework for VQ-VAE with a novel \textit{coverage loss}. 
\vspace{-0.1in}
\item \textbf{Goal-reaching trajectory generation with extended VQ codebook:} 
LAGMA constructs an extended VQ codebook to evaluate the states projected to a certain quantized vector and generate a goal-reaching trajectory based on this evaluation. Specifically, during training, we store various goal-reaching trajectories in a quantized latent space. Then, LAGMA uses them as a reference to follow during centralized training. % and such encouragement is done by a novel latent goal-guided incentive rather than arbitrary intrinsic reward whose magnitude needs to be adjusted throughout the tranining.
\vspace{-0.1in}
\item \textbf{Latent goal-guided intrinsic reward generation:} 
To encourage coordinated exploration toward reference trajectories sampled from the extended VQ codebook, LAGMA presents a latent goal-guided intrinsic reward. The proposed latent goal-guided intrinsic reward aims to accurately estimate TD-target for transitions toward goal-reaching paths, and we provide both theoretical and empirical support.
%However, an arbitrary intrinsic reward can adversely affect the overall policy learning if it is not properly scaled throughout the training. In addition, estimating an accurate TD-target for Q-learning is important in that it yields a better convergence on optimal policy. Hence, 
%LAGMA designs a latent goal-guided intrinsic reward for the accurate TD-target construction and provides both theoretical and empirical support.
\end{itemize}

\section{Related Works}
\label{sec:related_works}
\textbf{State space abstraction for RL}\;
State abstraction groups states with similar characteristics into a single cluster, and it has been effective in both model-based RL \cite{jiang2015abstraction,zhu2021deepmemory,hafner2020mastering} and model-free settings \cite{grzes2008multigrid,tang2020discretizing}.
NECSA \cite{li2023neural} adopted the abstraction of grid-based state-action pair for episodic control and achieved state-of-the-art (SOTA) performance in a general single-RL task. This approach could relax the limitations of inefficient memory usage in the conventional episodic control, but this requires an additional dimensionality reduction technique, such as random projection \cite{dasgupta2013experiments} in high-dimensional tasks. %In goal-conditioned RL, CQM \cite{lee2023cqm} uses VQ-VAE to construct quantized goal space for curriculum goal generation.
Recently, EMU \cite{na2024efficient} presented a semantic embedding for efficient memory utilization, but it still resorts to the episodic buffer, which requires storing both the states and the embeddings. This additional memory usage could be burdensome in tasks with large state space.
In contrast to previous research, we employ VQ-VAE for state embedding and estimate the overall value of abstracted states. 
In this manner, a sparse or delayed reward signal can be utilized by a broad range of states, particularly those in proximity. In addition, thanks to the discretized embeddings, the count-based estimation can be adopted to estimate the value of states projected to each discretized embedding.
Then, we generate a reference or goal-reaching trajectory based on this evaluation in quantized vector space and provide an incentive for transitions that overlap with this reference.

\textbf{Intrinsic incentive in RL}\;
In reinforcement learning, balancing exploration and exploitation during training is a paramount issue \cite{sutton2018reinforcement}. To encourage a proper exploration, researchers have presented various forms of methods in a single-agent case such as modified count-based methods \cite{bellemare2016unifying, ostrovski2017count, tang2017exploration}, prediction error-based methods \cite{stadie2015incentivizing, pathak2017curiosity, burda2018exploration, kim2018emi}, and information gain-based methods \cite{mohamed2015variational, houthooft2016vime}. In most cases, an incentive for exploration is introduced as an additional reward to a TD target in Q-learning or a regularizer to overall loss functions. 
Recently, diverse approaches mentioned earlier have been adopted in the multi-agent environment to promote exploration \cite{mahajan2019maven, wang2019influence, jaques2019social, mguni2021ligs}.
%MAVEN \cite{mahajan2019maven} introduces a regularizer that maximizes the mutual information between trajectories and latent variables, aiming to learn a varied set of behaviors. LIIR \cite{du2019liir} focuses on learning a parameterized individual intrinsic reward function through the maximization of a centralized critic.
As an example, EMC \cite{zheng2021episodic} utilizes episodic control \cite{lengyel2007hippocampal,blundell2016model} as regularization for the joint Q-learning, in addition to a curiosity-driven exploration by predicting individual Q-values. %EMC employs a random projection to store the highest return explored during training of a given state as a pair to the episodic buffer.
Learning with intrinsic rewards becomes more important in sparse reward settings. However, this intrinsic reward can adversely affect the overall policy learning if it is not properly annealed throughout the training. Instead of generating an additional reward signal solely encouraging exploration, LAGMA generates an intrinsic reward that guarantees a more accurate TD-target for Q-learning, yielding additional incentive toward a goal-reaching path.
%A sparse reward signal at the middle or at the end of an episode can be captured by EMC with episodic control. However, this signal could be associated with the exact same state due to the lack of an embedding procedure.
%\textcolor{blue}{On the other hand, our method utilizes the VQ-VAE \cite{van2017neural} to embed a given state and to generate latent subgoals or landmarks in discretized latent space. Add latent goal-guided intrinsic reward description. Also, the generated incentive by LAGMA theoretically guarantees a better TD-target, yielding better convergence on the optimal policy.}

    \begin{figure*}[!hbt]
    %\vspace 0.2in
    \begin{center}
    \centerline{\includegraphics[width=0.9\linewidth]{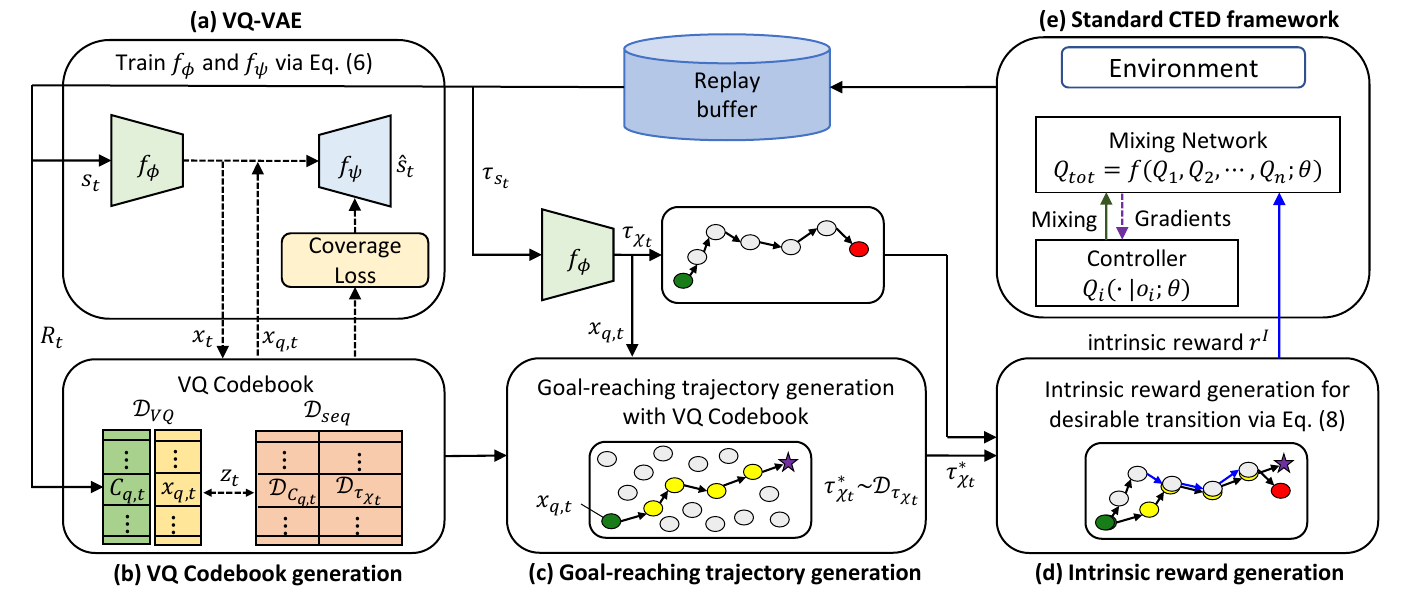}}
    \caption{Overview of LAGMA framework.(a) VQ-VAE constructs quantized vector space with coverage loss, while (b) VQ codebook stores goal-reaching sequences from a given $x_{q,t}$. Then, (c) the goal-reaching trajectory is compared with the current batch trajectory to generate (d) intrinsic reward. MARL training is done by (e) the standard CTDE framework.}
    \label{fig:overall_lagma}
    \end{center}
    \vspace{-0.30in}
    \end{figure*}
    
Additional related works regarding goal-conditioned reinforcement learning (GCRL) and subtask-conditioned MARL are presented in Appendix \ref{app:additional_related}.

\section{Preliminaries}
\label{sec:preliminaries}
\textbf{Decentralized POMDP} \;
A general cooperative multi-agent task with $n$ agents can be formalized as the Decentralized Partially Observable Markov Decision Process (Dec-POMDP) \cite{oliehoek2016concise}. DecPOMDP consists of a tuple $G = \left\langle {I,S,A,P,R,\Omega,O,n,\gamma} \right\rangle $, where $I$ is the finite set of $n$ agents; $s \in S$ is the true state in the global state space $S$; $A$ is the action space of each agent's action $a_i$ forming the joint action $\boldsymbol{a} \in {A^n}$; $P(s'|s,\boldsymbol{a})$ is the state transition function determined by the environment; $R$ is a reward function $r = R(s,\boldsymbol{a},s') \in \mathbb{R}$; $O$ is the observation function generating an individual observation from observation space $\Omega$, i.e., $o_i \in \Omega$; and finally, $\gamma \in [0,1)$ is a discount factor. In a general cooperative MARL task, an agent acquires its local observation $o_i$ at each timestep, and the agent selects an action ${a_i} \in A$ based on $o_i$. 
$P(s'|s,\boldsymbol{a})$ determines a next state $s'$ for a given current state $s$ and the joint action taken by agents $\boldsymbol{a}$. For a given tuple of $\left\{s, \boldsymbol{a}, s' \right\}$, $R$ provides an identical common reward to all agents. 
To overcome the partial observability in DecPOMDP, each agent often utilizes a local action-observation history ${\tau_i} \in T \equiv (\Omega \times A)$ for its policy ${\pi_i}(a|{\tau_i})$, where ${\pi}:T \times {A} \to [0,1]$ \cite{hausknecht2015deep,rashid2018qmix}. %In this paper, we also adopt this general practice for the agent's policy, $\pi_i$. 
Additionally, we denote a group trajectory as $\boldsymbol{\tau}=<\tau_1,...,\tau_n>$.

\textbf{Centralized Training with Decentralized Execution (CTDE)} \;
In fully cooperative MARL tasks, under the CTDE paradigm, value factorization approaches have been introduced by \cite{sunehag2017value,rashid2018qmix,son2019qtran,rashid2020weighted,wang2020qplex} and achieved state-of-the-art performance in complex multi-agent tasks such as SMAC \cite{samvelyan2019starcraft}. 
%Researchers developed a mixer structure to achieve the Individual Global Maximization (IGM) condition \cite{son2019qtran}, which guarantees the individual agent's greedy action becomes consistent with the greedy joint action as a team. 
%A mixer network in value factorization approaches utilizes global information during training and implicitly conducts credit assignments among agents for a given common reward. 
In value factorization approaches, the joint action-value function $Q_{\theta}^{tot}$ parameterized by $\theta$ is trained to minimize the following loss function.
\begin{equation}\label{Eq:CTDE_loss}
\begin{aligned}    \mathcal{L}({\theta})=\mathbb{E}_{\boldsymbol{\tau},\boldsymbol{a},r^{\textrm{ext}},\boldsymbol{\tau}'\in \mathcal{D}}[\left( r^{\textrm{ext}} + \gamma V_{\theta^-}^{tot}(\boldsymbol{\tau}') - {Q_{\theta}^{tot}}(\boldsymbol{\tau},{\boldsymbol{a}}) \right)^2]
\end{aligned}
\end{equation}
Here, $V_{\theta^-}^{tot}(\boldsymbol{\tau}')={{\max }_{{\boldsymbol{a}'}}}{Q_{\theta^{-}}^{tot}}({\boldsymbol{\tau}'},{\boldsymbol{a}'})$ by definition; $\mathcal{D}$ represents the replay buffer; $r^{\textrm{ext}}$ is an external reward provided by the environment; $Q_{\theta^{-}}^{tot}$ is a target network parameterized by $\theta^-$ for double Q-learning\cite{hasselt2010double,van2016deep}; and $Q_{\theta}^{tot}$ and $Q_{\theta^{-}}^{tot}$ include both mixer and individual policy network. 

\textbf{Goal State and Goal-Reaching Trajectory} \;
In general cooperative multi-agent tasks, undiscounted reward sum, i.e., $R_0=\Sigma_{t=0}^{T-1}r_t$, is maximized as $R_{\textrm{max}}$ if agents achieve a semantic goal, such as defeating all enemies in SMAC or scoring a goal in GRF. 
Thus, we define goal states and the goal-reaching trajectory in cooperative MARL as follows.
%\vspace{-0.01in}
\begin{definition}
\label{def:goal_reaching_traj}
(Goal State and Goal-Reaching Trajectory) For a given task dependent $R_{\textrm{max}}$ and an episodic sequence $\mathcal{T}:=\left\{s_0,\boldsymbol{a_0},r_0,s_1,\boldsymbol{a_1},r_1,...,s_T\right\}$, when $\Sigma_{t=0}^{T-1}r_t=R_{\textrm{max}}$ for $r_t\in\mathcal{T}$, we define such an episodic sequence as a goal-reaching sequence and denote as $\mathcal{T}^*$. Then, for $\forall s_t \in \mathcal{T}^*$, $\tau_{s_t}^*:=\left\{s_t,s_{t+1},...s_{T}\right\}$ is a goal-reaching trajectory and we define the final state of $\tau_{s_t}^*$ as a goal state denoted by $s_T^*$.
\end{definition}

\section{Methodology}
\label{sec:methodology}
This section introduces \textbf{LA}tent \textbf{G}oal-guided \textbf{M}ulti-\textbf{A}gent reinforcement learning (LAGMA) (Figure \ref{fig:overall_lagma}). We first explain how to construct a proper \textbf{(1) quantized embeddings via VQ-VAE}. To this end, we introduce a novel loss term called \textit{coverage loss} to distribute quantized embedding vectors across the overall embedding space. Then, we elaborate on the details of \textbf{(2) goal-reaching trajectory generation} with extended VQ codebook. Finally, we propose \textbf{(3) a latent goal-guided intrinsic reward} which guarantees a better TD-target for policy learning and thus yields a better convergence on optimal policy.
\begin{figure*}[!htb]
    \centering
    \subfigure[Training without $\mathcal{L}_\textrm{cvr}$ ($\lambda_{\textrm{cvr}}=0.0$).]{\includegraphics[width=0.33\linewidth]{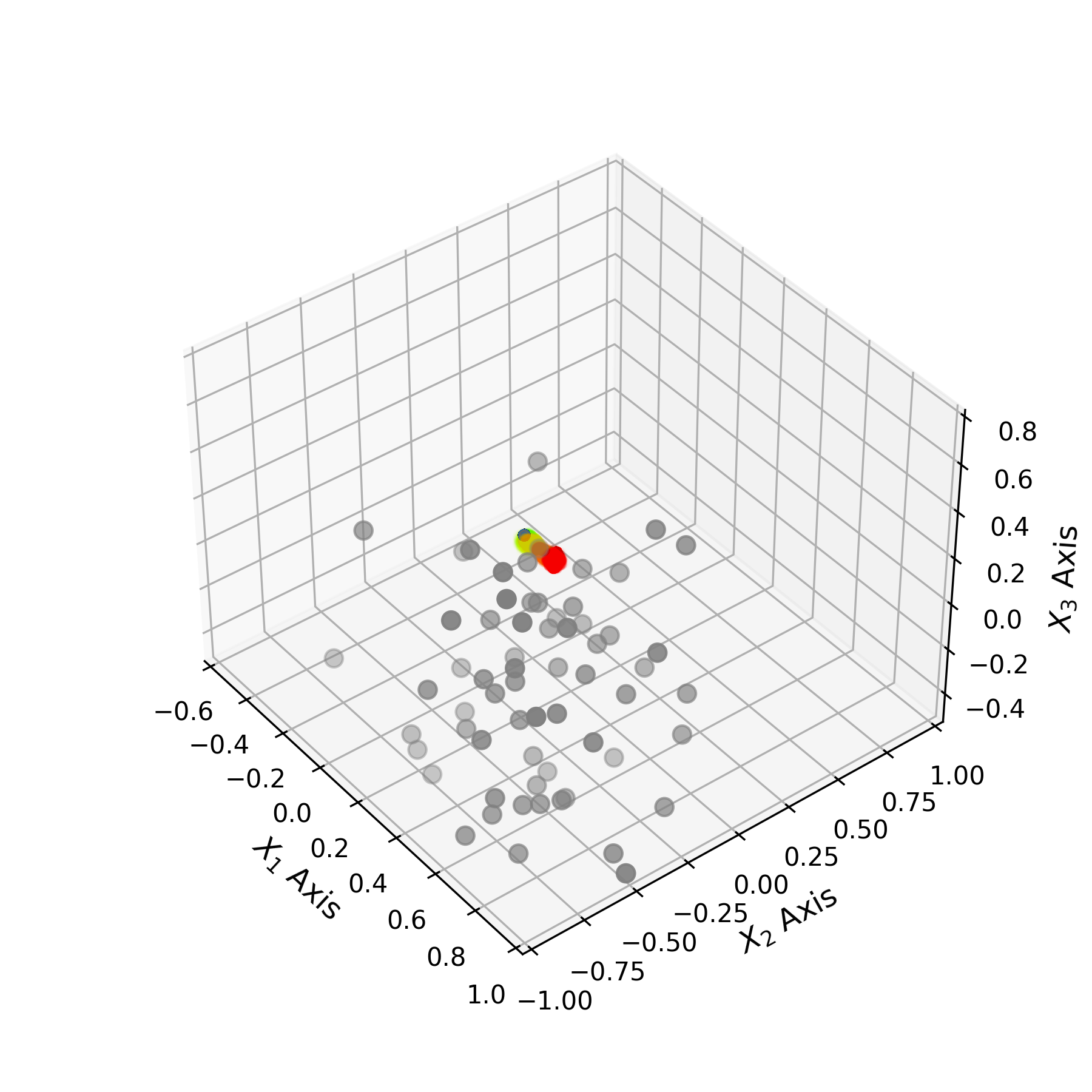}}    
    \subfigure[Training with ${\mathcal{L}_\textrm{cvr}^{\textrm{all}}}$ ($\lambda_{\textrm{cvr}}=0.2$).]{\includegraphics[width=0.33\linewidth]{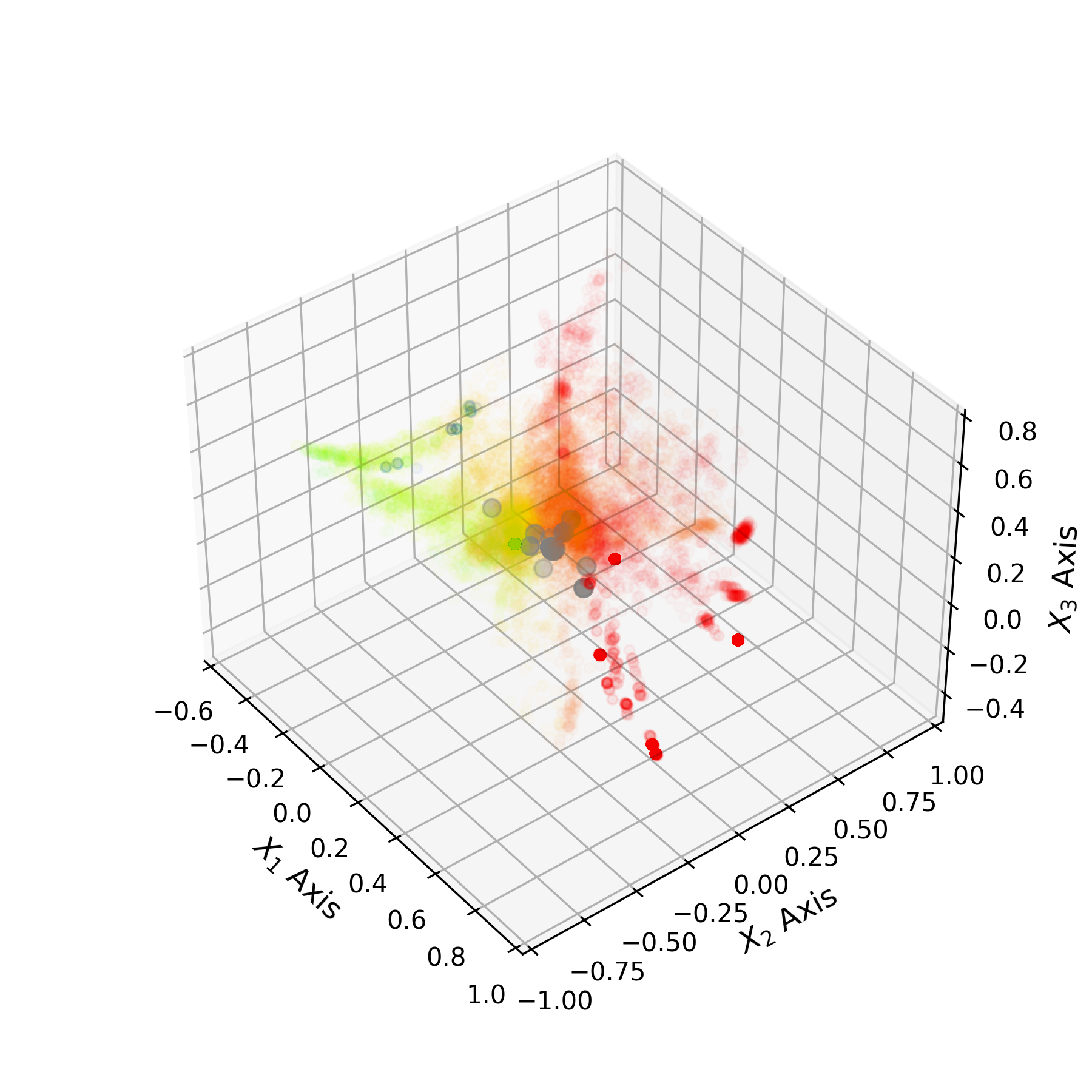}}
    \subfigure[Training with ${\mathcal{L}_\textrm{cvr}}$ ($\lambda_{\textrm{cvr}}=0.2$).]{\includegraphics[width=0.33\linewidth]{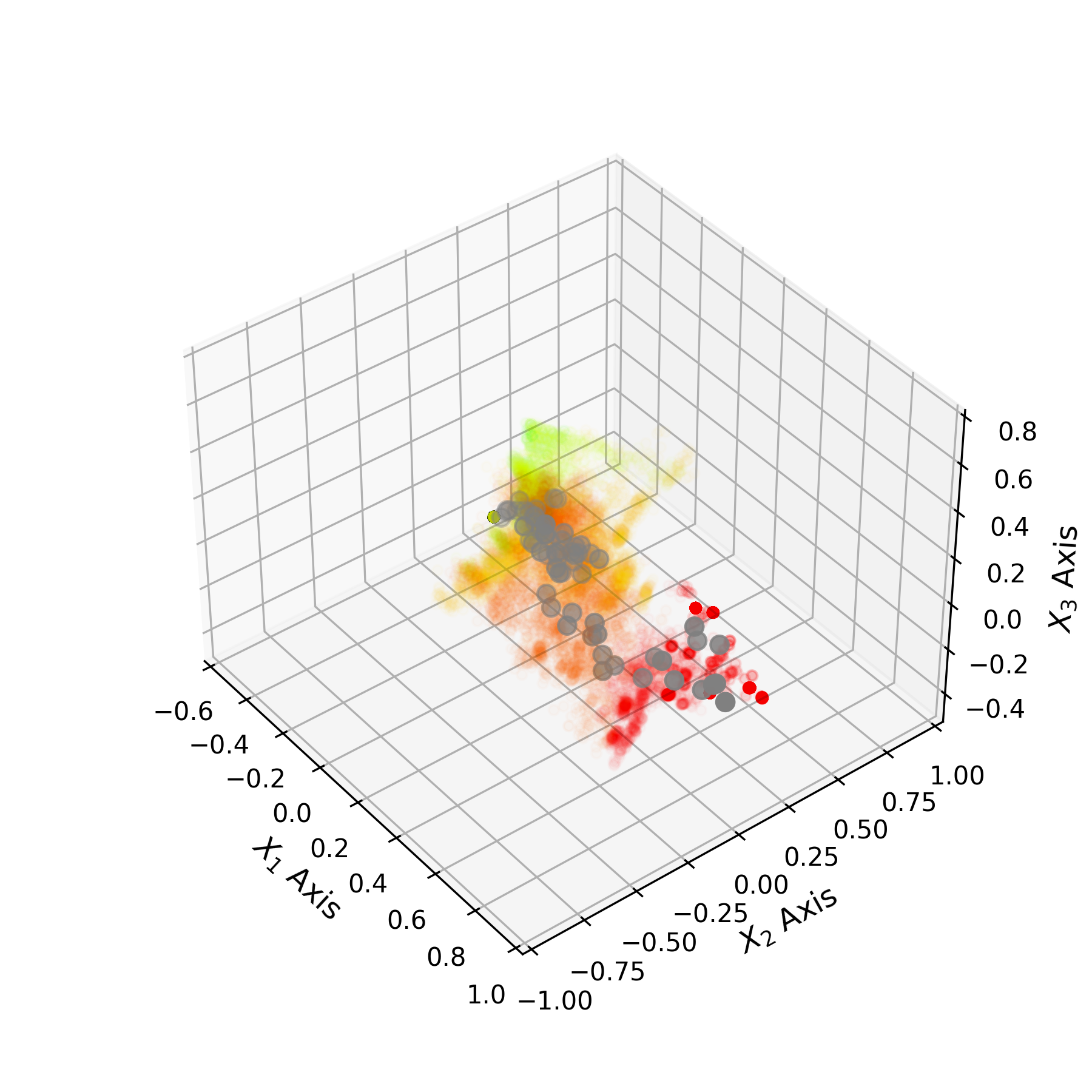}}       
    \vskip -0.10in	
    \caption{Visualization of embedding results via VQ-VAE. Under SMAC \texttt{5m\_vs\_6m} task, the size of codebook $n_c=64$, the latent dimension $D=8$; this illustrates embeddings at training time at T=1.0M. Colored dots represent $\chi$, which is a state presentation before quantization, and gray dots are quantized vector representations belonging to VQ codebook derived from the state representations. Colors from red to purple (rainbow) represent from small to large timestep within episodes.}
    \label{fig:vqvae_embedding}
    \vskip -0.15in
\end{figure*}
\subsection{State Embedding via Modified VQ-VAE}
\label{subsec:Latent_embedding}
In this paper, we adopt VQ-VAE as a discretization bottleneck \cite{van2017neural} to construct a discretized low-dimensional embedding space. Thus, we first define $n_c$-trainable embedding vectors (codes) $e_j \in \mathbb{R}^D$ in the codebook where $j=\{1,2,...,n_c\}$. An encoder network $f_{\phi}$ in VQ-VAE projects a global state $s$ toward $D$-dimensional vector, $x=f_{\phi}(s) \in \mathbb{R}^D$. Instead of a direct usage of latent vector $x$, we use a discretized latent $x_q$ by the \textit{quantization process} which maps an embedding vector $x$ to the nearest embedding vector in the codebook as follows.
%\vspace{-0.05in}
\begin{equation}\label{Eq:quantization}
	x_q=e_{z}, \textrm{where}\; z=\textrm{argmin}_j||x-e_j||_2
\end{equation}
Then, the quantized vector $x_q$ is used as an input to a decoder $f_{\psi}$ which reconstructs the original state $s$. To train an encoder, a decoder, and embedding vectors in the codebook, we consider the following objective similar to \cite{van2017neural,islam2022discrete,lee2023cqm}. 
\begin{equation}\label{Eq:VQ_loss}
\begin{aligned}
	&{\mathcal{L}_{VQ}}(\phi,\psi,\boldsymbol{e}) = ||f_\psi([x=f_{\phi}(s)]_q) - s||_2^2\\ &+ {\lambda _{{\rm{vq}}}}||{\rm{sg[}}{f_\phi }(s)] - {x_q}||_2^2 
     + {\lambda _{{\rm{commit}}}}||{f_\phi }(s) - {\rm{sg[}}{x_q}]||_2^2
\end{aligned}
\end{equation}
Here, $[\cdot]_q$ and $\textrm{sg}[\cdot]$ represent a quantization process and stop gradient, respecitvely. $\lambda _{{\rm{vq}}}$ and $\lambda _{{\rm{commit}}}$ are scale factor for correponsding terms. The first term in Eq. \eqref{Eq:quantization} is the reconstruction loss, while the second term represents VQ-objective which makes an embedding vector $e$ move toward $x=f_{\phi}(s)$. The last term called a commitment loss enforces an encoder to generate $f_{\phi}(s)$ similar to $x_q$ and prevents its output from growing significantly. To approximate the gradient signal for an encoder, we adopt a straight-through estimator \cite{bengio2013estimating}.

When adopting VQ-VAE for state embedding, we found that only a few quantized vectors $e$ in the codebook are selected throughout an episode, which makes it hard to utilize such a method for meaningful state embedding. 
We presumed that the reason is the narrow projected embedding space from feasible states compared to a whole embedding space, i.e., $\mathbb{R}^D$. Thus, most randomly initialized quantized vectors $e$ locate far from the latent space of states in the current replay buffer $\mathcal{D}$, denoted as $\chi=\{x\in \mathbb{R}^D:x=f_{\phi}(s),s\in \mathcal{D}\}$, leaving only a few $e$ close to $x$ within an episode.
To resolve this issue, we introduce the \textit{coverage loss} which minimizes the overall distance between the current embedding $x$ and all vectors in the codebook, i.e., $e_j$ for all $j=\{1,2,...,n_c\}$. 
\vspace{-0.125in}
\begin{equation}\label{Eq:coverage_loss_all}
\begin{aligned}
	{\mathcal{L}_\textrm{cvr}^{\textrm{all}}}(\boldsymbol{e}) = \frac{1}{{{n_c}}}\sum\limits_{j = 1}^{{n_c}} {||{\rm{ }}} {\rm{sg[}}{f_\phi }(s)] - e_j||_2^2
\end{aligned}
\end{equation}
Although ${\mathcal{L}_\textrm{cvr}^{\textrm{all}}}$ could lead embedding vectors toward $\chi$, all quantized vectors tend to locate the center of $\chi$ rather than densely covering whole $\chi$ space. Thus, we consider a \textit{timestep dependent indexing} $\mathcal{J}(t)$ when computing the coverage loss. The purpose of introducing $\mathcal{J}(t)$ is to make only sequentially selected quantized vectors close to the current embedding $x_t$ so that quantized embeddings are uniformly distributed across $\chi$ according to timesteps. Then, the final form of coverage loss can be expressed as follows.
\begin{equation}\label{Eq:coverage_loss}
\begin{aligned}
	{\mathcal{L}_\textrm{cvr}}(\boldsymbol{e}) = \frac{1}{{{|\mathcal{J}(t)|}}}\sum\limits_{j\in\mathcal{J}(t)} {||{\rm{ }}} {\rm{sg[}}{f_\phi }(s)] - e_j||_2^2
\end{aligned}
\end{equation}
We defer the details of $\mathcal{J}(t)$ construction to Appendix \ref{app:details_implementation}. By considering the coverage loss in Eq. \eqref{Eq:coverage_loss}, not only the nearest quantized vector but also all vectors in the codebook move towards overall latent space $\chi$. In this way, $\chi$ can be well covered by quantized vectors in the codebook. Thus, we consider the overall learning objective as follows.
\begin{equation}\label{Eq:overall_obj_vqvae}
\begin{aligned}
	{\mathcal{L}_{VQ}^{tot}}(\phi,\psi,\boldsymbol{e}) = {\mathcal{L}_{VQ}}(\phi,\psi,\boldsymbol{e}) + \lambda _{{\rm{cvr}}} {\mathcal{L}_\textrm{cvr}}(\boldsymbol{e})
\end{aligned}
\end{equation}
where $\lambda _{{\rm{cov}}}$ is a scale factor for ${\mathcal{L}_\textrm{cvr}}$.

Figure \ref{fig:vqvae_embedding} presents the visualization of embeddings by principal component analysis (PCA) \cite{wold1987principal}. In Figure \ref{fig:vqvae_embedding}, the training without ${\mathcal{L}_\textrm{cvr}}$ leads to quantized vectors that are distant from $\chi$. 

In addition, embedding space $\chi$ itself distributes around a few quantized vectors due to the commitment loss in Eq. \eqref{Eq:VQ_loss}. Considering ${\mathcal{L}_\textrm{cvr}^{\textrm{all}}}$ makes quantized vectors close to $\chi$ but they majorly locate around the center of $\chi$ rather than distributed properly. 
\begin{wrapfigure}{l}{4.0cm}
	%\vspace{-0.2in}
	\includegraphics[width=4.0cm]{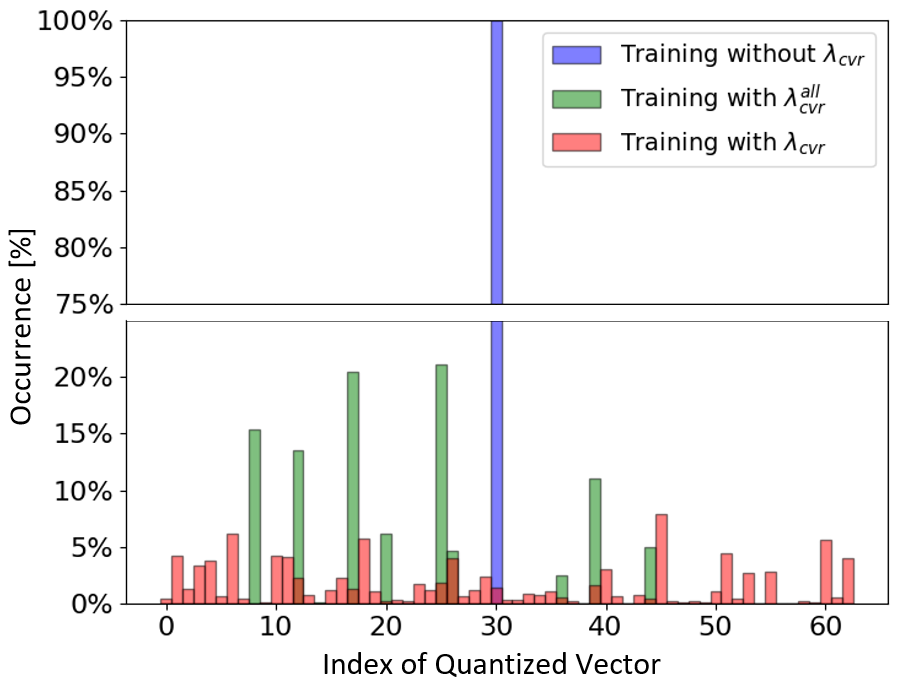}
	\vspace {-0.23in}
	\caption{Histogram of recalled quantized vector.}
	\label{fig:coordinated_trj}
	%\vskip -0.10in
\end{wrapfigure}
On the other hand, the proposed ${\mathcal{L}_\textrm{cvr}}$ results in well-distributed quantized vectors over $\chi$ space so that they can properly represent latent space of $s\in \mathcal{D}$. Figure \ref{fig:coordinated_trj} presents the occurrence of recalled quantized vectors for state embeddings in Fig. \ref{fig:vqvae_embedding}. We can see that training with $\lambda_{\textrm{cvr}}$ guarantees quantized vectors well distributed across $\chi$. Appendix \ref{app:details_implementation} presents the training algorithm for the proposed VQ-VAE.

\subsection{Goal-Reaching Trajectory Generation with Extended VQ Codebook}
\label{subsec:trajectory_generation}
After constructing quantized vectors in the codebook, we need to properly estimate the value of states projected to each quantized vector. Note that the estimated value of each quantized vector is used when generating an additional incentive to desired transitions, i.e., transition toward a goal-reaching trajectory.
Thanks to the quantized vectors in the codebook, we can resort to count-based estimation for the value estimation of a given state. For a given $s_t$, a cumulative return from $s_t$ denoted as $R_t=\Sigma_{i=t}^{T-1}\gamma^{i-t} r_i$, and $x_{q,t}=[x_t=f_{\phi}(s_t)]_q$, the value of $x_{q,t}$ can be computed via count-based estimation as
\begin{equation}\label{Eq:Cqt}
\begin{aligned}
	C_{q,t}(x_{q,t}) = \frac{1}{{{N_{x_{q,t}}}}}\sum\limits_{j = 1}^{{N_{x_{q,t}}}}R_t^j(x_{q,t})
\end{aligned}
\end{equation}
Here, $N_{x_{q,t}}$ is the visitation count on $x_{q,t}$. However, as an encoder network $f_{\phi}$ is updated during training, the match between a specific state $s$ and $x=f_{\phi}(s)$ can break. Thus, it becomes hard to accurately estimate the value of $s$ via the count-based visit on $x_{q,t}$. To resolve this, we adopt a moving average with a buffer size of $m$ when computing $C_{q,t}(x_{q,t})$ and store the updated value in the extended codebook, $\mathcal{D}_{VQ}$. 
Appendix \ref{app:details_codebook} presents structural details of $\mathcal{D}_{VQ}$.

\begin{comment}
In addition to the expected value $C_{q,t}$, we additionally store the trajectory value $C_{q0}$, which considers both the mean and variance of the total episodic rewards $R_{tot}=\Sigma_{t=0}^{T-1} r_t$ earned by the traces that contain the $x_{q,t}$ as
\begin{equation}\label{Eq:Cq0}
\begin{aligned}
        C_{q0}(x_{q,t}) = \mu_{q0} + \lambda_{\textrm{exp}}\sigma_{q0}
	%\mu_{q0}(x_{q,t}) = \frac{1}{{{N_{x_{q,t}}}}}\sum\limits_{j = 1}^{{N_{x_{q,t}}}}R_{tot}^j
\end{aligned}
\end{equation}
where $\mu_{q0}$ and $\sigma_{q0}$ are the mean and variance of $R_{tot}$ computed via FIFO style buffer sized of $m$, similar to $C_{q,t}$, and $\lambda_{\textrm{exp}}$ is an exploratory scale factor. Fianlly, for a given $x_{q,t}$ and its index $z_t$, we update $\mathcal{D}_{VQ}^{z_t}.C_{q0}\leftarrow C_{q0}(x_{q,t})$.
\end{comment}

After constructing $\mathcal{D}_{VQ}$, now we need to determine a goal-reaching trajectory $\tau_{s_t}^*$, defined in Definition \ref{def:goal_reaching_traj}, in the latent space. This trajectory is considered as a reference trajectory to incentivize desired transitions. Let the state sequence from $s_t$ and its corresponding latent sequence projected by $f_{\phi}$ as $\tau_{s_t}=\{s_{t},s_{t+1},s_{t+2},...,s_T\}$ and $\tau_{x_t}=f_{\phi}(\tau_{s_t})$, respectively.
Then, latent sequence after quantization process can be expressed as $\tau_{\chi_t}=[f_{\phi}(\tau_{x_t})]_q=\{x_{q,t},x_{q,t+1},x_{q,t+2},...,x_{q,T}\}$. To evaluate the value the of trajectory $\tau_{\chi_t}$, we use $C_{q,t}$ value in the codebook $\mathcal{D}_{VQ}$ of an initial quantized vector $x_{q,t}$ in $\tau_{\chi_{t}}$.

\begin{figure*}[hbt]
    \centering
    \subfigure[Trajectory embedding.]{\includegraphics[width=0.31\linewidth]{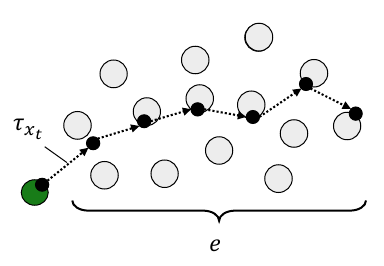}}    
    \subfigure[Trajectory in quantized latent space.]{\includegraphics[width=0.31\linewidth]{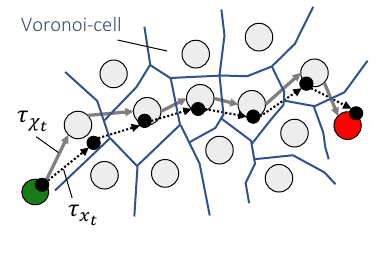}}
    \subfigure[Intrinsic reward generation.]{\includegraphics[width=0.31\linewidth]{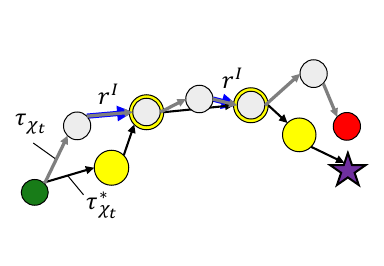}}     
    \vskip -0.10in	
    \caption{Intrinsic reward generation by comparing the current trajectory in quantized latent space ($\tau_{\chi_t}$) with a sampled goal-reaching trajectory ($\tau_{\chi_t}^*$).}
    \label{fig:intrinsic_reward_schematic}
    \vskip -0.15in
\end{figure*}

To encourage desired transitions contained in $\tau_{\chi_t}$ with high $C_{q,t}$ value, we need to keep a sequence data of $\tau_{\chi_t}$. For a given starting node $x_{q,t}$, we keep top-$k$ sequences in $\mathcal{D}_{seq}$ based on their $C_{q,t}$. Thus, $\mathcal{D}_{seq}$ consists of two parts; $\mathcal{D}_{\tau_{\chi_t}}$ stores top-$k$ sequences of $\tau_{\chi_t}$ and $\mathcal{D}_{C_{q,t}}$ stores their corresponding $C_{q,t}$ values. Updating algorithm for a sequence buffer $\mathcal{D}_{seq}$ and structural details of $\mathcal{D}_{seq}$ are presented in Appendix \ref{app:details_codebook}.

As in Definition \ref{def:goal_reaching_traj}, the highest return in cooperative multi-agent tasks can only be achieved when the semantic goal is satisfied. Thus, once agents have achieved a common goal during training, goal-reaching trajectories starting from various initial positions are stored in $\mathcal{D}_{seq}$. 
After we construct $\mathcal{D}_{seq}$, a reference trajectory $\tau_{\chi_t}^*$ can be sampled out of $\mathcal{D}_{\tau_{\chi_t}}$. For a given initial position $x_{q,t}$ in the quantized latent space, we randomly sample a reference trajectory or goal-reaching trajectory from $\mathcal{D}_{seq}$.

\subsection{Intrinsic Reward Generation}
\label{subsec:intrinsic_reward_generation}
With a goal-reaching trajectory $\tau_{\chi_t}^*$ from the current state, we can determine the desired transitions that lead to a goal-reaching path, simply by checking whether the quantized latent $x_{q,t}$ at each timestep $t$ is in $\tau_{\chi_t}^*$. However, before quantized vectors $e$ in the codebook well cover the latent distribution $\chi$, only a few $e$ vectors are selected and thus the same $x_q$ will be repeatedly obtained. In such a case, staying in the same $x_q$ will be encouraged by intrinsic reward if $x_q \in \tau_{\chi_t}^*$. To prevent this, we only provide an incentive to the desired transition toward $x_{q,t+1}$ such that $x_{q,t+1} \in \tau_{\chi_t}^{*} \;\textrm{and}\; x_{q,t+1} \ne {x_{q,t}}$. A remaining problem is how much we incentivize such a desired transition. Instead of an arbitrary incentive, we want to design an additional reward to guarantee a better TD-target, to converge on optimal policy.
\begin{proposition}
\label{proposition_1}
Provided that $\tau_{\chi_t}^{*}$ is a goal-reaching trajectory and $s' \in \tau_{\chi_t}^{*}$, an intrinsic reward $r^I(s'):=\gamma ({C_{q,t}}(s') - {{\max }_{\boldsymbol{a}'}}{Q_{{\theta ^ - }}}(s',\boldsymbol{a}'))$ to the current TD-target $y=r(s,\boldsymbol{a}) + \gamma V_{{\theta ^ - }}(s')$ guarantees a true TD-target as $y^*=r(s,\boldsymbol{a}) + \gamma V^*(s')$, where $V^*(s')$ is a true value of $s'$.   
\end{proposition}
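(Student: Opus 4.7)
The plan is to prove the claim in two stages: first an algebraic substitution that reduces the augmented TD-target to $r(s,\boldsymbol{a}) + \gamma C_{q,t}(s')$, and then an argument that on a goal-reaching trajectory the count-based quantity $C_{q,t}(s')$ coincides with the true value $V^*(s')$. Concretely, I first write out $y + r^I(s')$ and observe that the $\gamma\max_{\boldsymbol{a}'}Q_{\theta^-}(s',\boldsymbol{a}')$ term appearing in $V_{\theta^-}(s')$ cancels against the same term subtracted inside $r^I(s')$. This leaves
\begin{equation*}
y + r^I(s') \;=\; r(s,\boldsymbol{a}) + \gamma\,C_{q,t}(s'),
\end{equation*}
so the entire intrinsic reward is engineered to replace the bootstrap estimate $V_{\theta^-}(s')$ by the empirical codebook estimate $C_{q,t}(s')$.

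The second stage uses the definition of a goal-reaching trajectory in Definition~\ref{def:goal_reaching_traj}. By hypothesis $s'\in\tau_{\chi_t}^{*}$, meaning $s'$ lies on a stored trajectory whose return attains $R_{\mathrm{max}}$. Since $R_{\mathrm{max}}$ is the maximum achievable undiscounted return in the task, any sub-trajectory of such a sequence starting from $s'$ realizes the optimal discounted return from $s'$; that is, the cumulative return $R'=\sum_{i}\gamma^{i}r_{i}$ computed from $s'$ along $\tau_{\chi_t}^{*}$ equals $V^{*}(s')$. Substituting such goal-reaching returns into the count-based estimator in Eq.~\eqref{Eq:Cqt} (restricted, as in the extended codebook $\mathcal{D}_{VQ}$ and the top-$k$ sequence buffer $\mathcal{D}_{seq}$, to the goal-reaching sequences whose initial quantized vector is $x_{q,t}(s')$) therefore yields $C_{q,t}(s') = V^{*}(s')$. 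Plugging this back gives $y + r^I(s') = r(s,\boldsymbol{a}) + \gamma V^{*}(s') = y^{*}$, which is the claimed true TD-target.

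The main obstacle I anticipate is the identification $C_{q,t}(s') = V^{*}(s')$, since the count-based average in Eq.~\eqref{Eq:Cqt} is taken over \emph{all} returns observed from states projected to the same quantized vector, not only over goal-reaching ones. To make this clean, I would lean on the trajectory-generation machinery of Section~\ref{subsec:trajectory_generation}: the sampled reference $\tau_{\chi_t}^{*}$ is drawn from $\mathcal{D}_{seq}$, which by construction retains only top-$k$ sequences, and once a semantic goal has been achieved during training these top entries are goal-reaching with return $R_{\mathrm{max}}$; hence the $C_{q,t}$ value associated in $\mathcal{D}_{C_{q,t}}$ with such a sequence is the optimal return from $s'$. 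I would also briefly note the idealization implicit here — that the quantization is fine enough for $x_{q,t}(s')$ to faithfully represent $s'$ and that the moving-average buffer of size $m$ has been refreshed after the encoder $f_\phi$ has stabilized — so that the codebook estimate can be treated as $V^{*}(s')$ rather than a biased empirical approximation. With those conditions in place, the cancellation in the first paragraph combined with the value identification in the second paragraph completes the proof.
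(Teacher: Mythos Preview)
Your proof follows essentially the same two-step approach as the paper's: the algebraic cancellation reducing $y+r^I(s')$ to $r(s,\boldsymbol{a})+\gamma\,C_{q,t}(s')$, followed by identifying $C_{q,t}(s')$ with $V^*(s')$ under the goal-reaching hypothesis (the paper wraps this in an expectation and calls the resulting sum an unbiased Monte-Carlo estimate of $V^*(s')$, but the substance is identical). One small correction to your supplementary discussion: the $C_{q,t}$ actually used in the intrinsic reward is pulled from $\mathcal{D}_{VQ}$ (see Algorithm~\ref{alg:Goal_traj}), not from the top-$k$ buffer $\mathcal{D}_{C_{q,t}}$ in $\mathcal{D}_{seq}$, so your appeal to the top-$k$ filtering to resolve the averaging obstacle is slightly misplaced --- though the paper's own proof simply asserts the identification without addressing that point at all.
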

\begin{proof} 
Please see Appendix \ref{sec:proof_proposition1}.
\end{proof}
%\vskip -0.10in
According to Proposition \ref{proposition_1}, when $\tau_{\chi_t}^{*}$ is a goal-reaching trajectory and $s' \in \tau_{\chi_t}^{*}$, we can set a true TD-target by adding an intrinsic reward $r^I(s')$ to the current TD-target $y$, yielding a better convergence on an optimal policy. In the case when a reference trajectory $\tau_{\chi_t}^{*}$ is not a goal-reaching trajectory, $r^I(s')$ incentivizes the transition toward the high-return trajectory experienced so far. Thus, we define a latent goal-guided intrinsic reward $r^I$ as follows.
\begin{equation}\label{Eq:intrinsic_reward}
\begin{aligned}
r_t^I(s_{t+1})=&\gamma ({C_{q,t}}(s_{t+1}) - {{\max }_{\boldsymbol{a}'}}{Q_{{\theta ^ - }}}(s_{t+1},\boldsymbol{a}')),\\
&\textrm{if} \; x_{q,t+1} \in \tau_{\chi_t}^{*} \;\textrm{and}\; x_{q,t+1} \ne {x_{q,t}}
\end{aligned}
\end{equation}
Note that $r_t^I(s_{t+1})$ is added to $y_t=r_t+\gamma V_{\theta^-}$ not $y_{t+1}$. In addition, we can make sure that $r^I$ becomes non-negative so that an inaccurate estimate of $C_{q,t}(s')$ in the early training phase does not adversely affect the estimation of $V_{\theta^-}$. Algorithm \ref{alg:Goal_traj} summarizes the overall method for goal-reaching trajectory and an intrinsic reward generation. Figure \ref{fig:intrinsic_reward_schematic} illustrates the schematic diagram of quantized trajectory embeddings  $\tau_{\chi_t}$ via VQ-VAE and intrinsic reward generation by comparing it with a goal-reaching trajectory, $\tau_{\chi_t}^{*}$.
\vspace{-0.06in}
\begin{algorithm}[!hbt]    
   \caption{Goal-reaching Trajectory and Intrinsic Reward Generation}
   \label{alg:Goal_traj}
\begin{algorithmic}[0] 
   \STATE {\bfseries Given:} Sequences of the current batch $[\tau_{\chi_t}^i]_{i=1}^{B}$, a sequence buffer $\mathcal{D}_{seq}$, an update interval $n_{\textrm{freq}}$ for $\tau_{\chi_t}^*$, and VQ-VAE codebook $\mathcal{D}_{VQ}$
   \FOR{$i=1$ {\bfseries to} $B$}
   \STATE Compute $R_{t}^i$
   \FOR{$t=0$ {\bfseries to} $T$}
   \STATE Get index $z_t \leftarrow \tau_{\chi_t}^{i}$
   \IF{\textrm{mod}$(t,n_{\textrm{freq}})$}   
   \STATE Run {\bfseries \cref{alg:Update_seq}} to update $\mathcal{D}_{seq}^{z_t}$ with $R_{t}^i$
   \STATE Sample a reference trajectory $\tau_{\chi_t}^*$ from $\mathcal{D}_{seq}^{z_t}$
   \ELSE{}
   \IF{$z_t \in \tau_{\chi_t}^*$ and $z_t \neq z_{t-1}$}
   \STATE Get $C_{q,t}\leftarrow\mathcal{D}_{VQ}^{z_t}.C_{q,t}$    
   \STATE $(r_{t-1}^I)^i\leftarrow\gamma\textrm{max}(C_{q,t}-\textrm{max}_{a'}Q_{\theta^{-}}(s_t,a'),0)$
   \ENDIF
   \ENDIF
   \ENDFOR   
   \ENDFOR   
\end{algorithmic}
\end{algorithm}
\vspace{-0.15in}

\subsection{Overall Learning Objective}
\label{subsec:overall_learning}
This paper adopts a conventional CTDE paradigm \cite{oliehoek2008optimal,gupta2017cooperative}, and thus any form of mixer structure can be used for value factorization. We use the mixer structure presented in QMIX \cite{rashid2018qmix} similar to \cite{yang2022ldsa,wang2020rode,jeon2022maser} to construct the joint Q-value ($Q^{tot}$) from individual Q-functions. By adopting the latent goal-guided intrinsic reward $r^I$ to Eq. \eqref{Eq:CTDE_loss}, the overall loss function for the policy learning can be expressed as follows.
\begin{equation}\label{Eq:overall_loss}
\begin{aligned}
    \mathcal{L}({\theta}) = \left( r^{\textrm{ext}} + r^I + \gamma {{\max }_{{\boldsymbol{a}'}}}{Q_{\theta^{-}}^{tot}}({s'},{\boldsymbol{a}'}) - {Q_{\theta}^{tot}}(s,{\boldsymbol{a}}) \right)^2
\end{aligned}
\end{equation}

\begin{figure*}[!htb]
    %\vskip -0.1in
    \begin{center}
        \centerline{\includegraphics[width=1.0\linewidth]{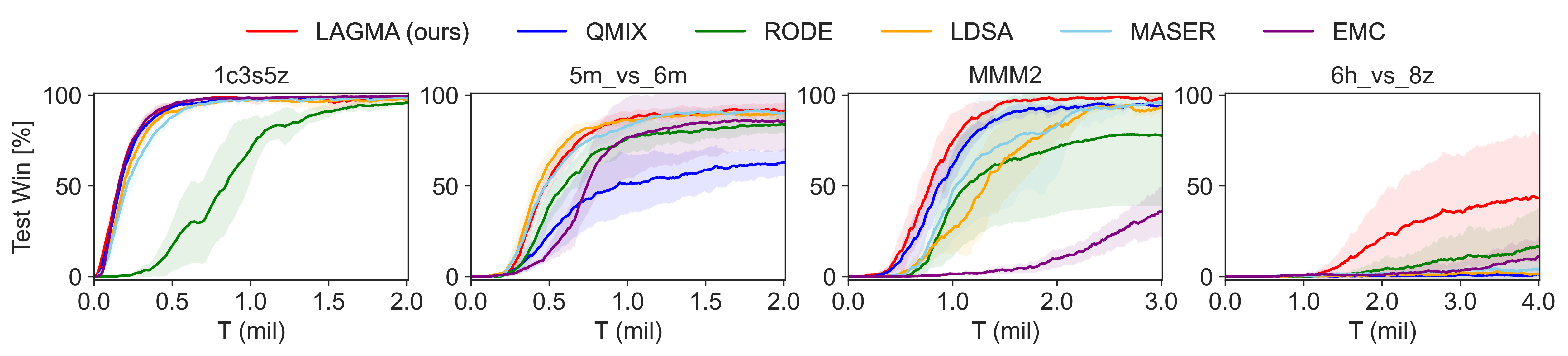}}
        \vskip -0.3in
    \end{center}
    \caption{Performance comparison of LAGMA against baseline algorithms on two {\bf easy and hard} SMAC maps: \texttt{1c3s5z}, \texttt{5m\_vs\_6m}, and two {\bf super hard} SMAC maps: \texttt{MMM2}, \texttt{6h\_vs\_8z}. (Dense reward setting)}
    \label{fig:smac_performance_dense}
    \vskip -0.1in
\end{figure*}	
\begin{figure*}[!htb]
    %\vskip -0.1in
    \begin{center}
        \centerline{\includegraphics[width=1.0\linewidth]{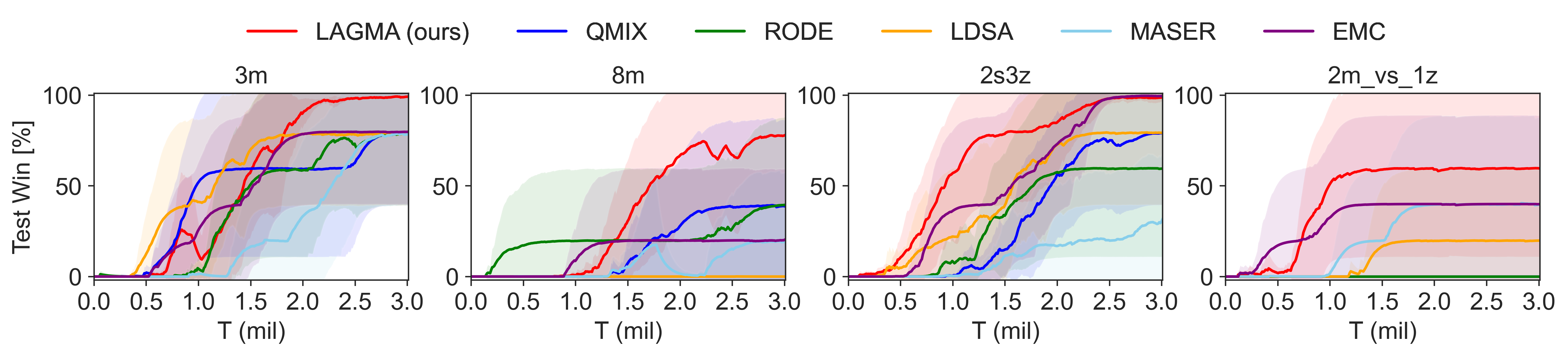}}
        \vskip -0.3in
    \end{center}
    \caption{Performance comparison of LAGMA against baseline algorithms on four maps: \texttt{3m}, \texttt{8m}, \texttt{2s3z}, and \texttt{2m\_vs\_1z}. (Sparse reward setting)}
    \label{fig:smac_performance_sprase}
    \vskip -0.2in
\end{figure*}	

Note that here $r^I$ does not include any scale factor to control its magnitude.
For an individual policy via Q-function, GRUs are adopted to encode a local action-observation history $\tau$ to overcome the partial observability in POMDP similar to most MARL approaches \cite{sunehag2017value,rashid2018qmix,son2019qtran,rashid2020weighted,wang2020qplex}. However, in Eq. \eqref{Eq:overall_loss}, we express the equation with $s$ instead of $\tau$ for the conciseness and coherence with the mathematical derivation. The overall training algorithm for both VQ-VAE training and policy learning is presented in Appendix \ref{app:details_implementation}.

\vspace{-0.05in}
\section{Experiments}
\label{sec:experiments}
In this section, we present experiment settings and results to evaluate the proposed method. We have designed our experiments with the intention of addressing the following inquiries denoted as Q1-3.	
\vspace{-0.1in}
\begin{itemize}
    \item Q1. The performance of LAGMA in comparison to state-of-the-art MARL frameworks in both dense and sparse reward settings
    \vspace{-0.075in}
    \item Q2. The impact of the proposed embedding method on overall performance
    \vspace{-0.075in}
    \item Q3. The efficiency of latent goal-guided incentive compared to other reward design
\end{itemize}
\vspace{-0.1in}
We consider complex multi-agent tasks such as SMAC \cite{samvelyan2019starcraft} and GRF \cite{kurach2020google} as benchmark problems. In addition, as baseline algorithm, we consider various baselines in MARL such as QMIX \cite{rashid2018qmix}, RODE \cite{wang2020rode} and LDSA \cite{yang2022ldsa} adopting a role or skill conditioned policy, MASER \cite{jeon2022maser} presenting agent-wise individual subgoals from replay buffer, and EMC \cite{zheng2021episodic} adopting episodic control. Appendix \ref{app:details_experiments} presents further details of experiment settings and implementations, and Appendix \ref{computation_analysis} illustrates the resource usage and the computational cost required for the implementation and training of LAGMA. In addition, additional generalizability tests of LAGMA are presented in Appendix \ref{generalizability_test}. Our code is available at: \textcolor{blue}{\texttt{https://github.com/aailabkaist/LAGMA}}.

\subsection{Performance evaluation on SMAC}
%When evaluating LAGMA on SMAC, we consider both dense and sparse reward settings.
\textbf{Dense reward settings} \; For dense reward settings, we follow the default setting presented in \cite{samvelyan2019starcraft}.
Figure \ref{fig:smac_performance_dense} illustrates the overall performance of LAGMA. Thanks to quantized embedding and latent goal-guided incentive, LAGMA shows significant performance improvement compared to the backbone algorithm, i.e., QMIX, and other state-of-the-art (SOTA) baseline algorithms, especially in \textbf{super hard} SMAC maps.

\textbf{Sparse reward settings} \; For a sparse reward setting, we follow the reward design in MASER \cite{jeon2022maser}. Appendix \ref{app:details_experiments} enumerates the details of reward settings.
Similar to dense reward settings, LAGMA shows the best performance in sparse reward settings thanks to the latent goal-guided incentive. Sparse reward hardly generates a reward signal in experience replay, thus training with the experience of the exact same state takes a long time to find the optimal policy. However, LAGMA considers the value of semantically similar states projected onto the same quantized vector during training, so its learning efficiency is significantly increased.

\begin{figure}[!hbt]
    %\vspace{-0.3in}
    \begin{center}
        \centerline{\includegraphics[width=1.0\linewidth]{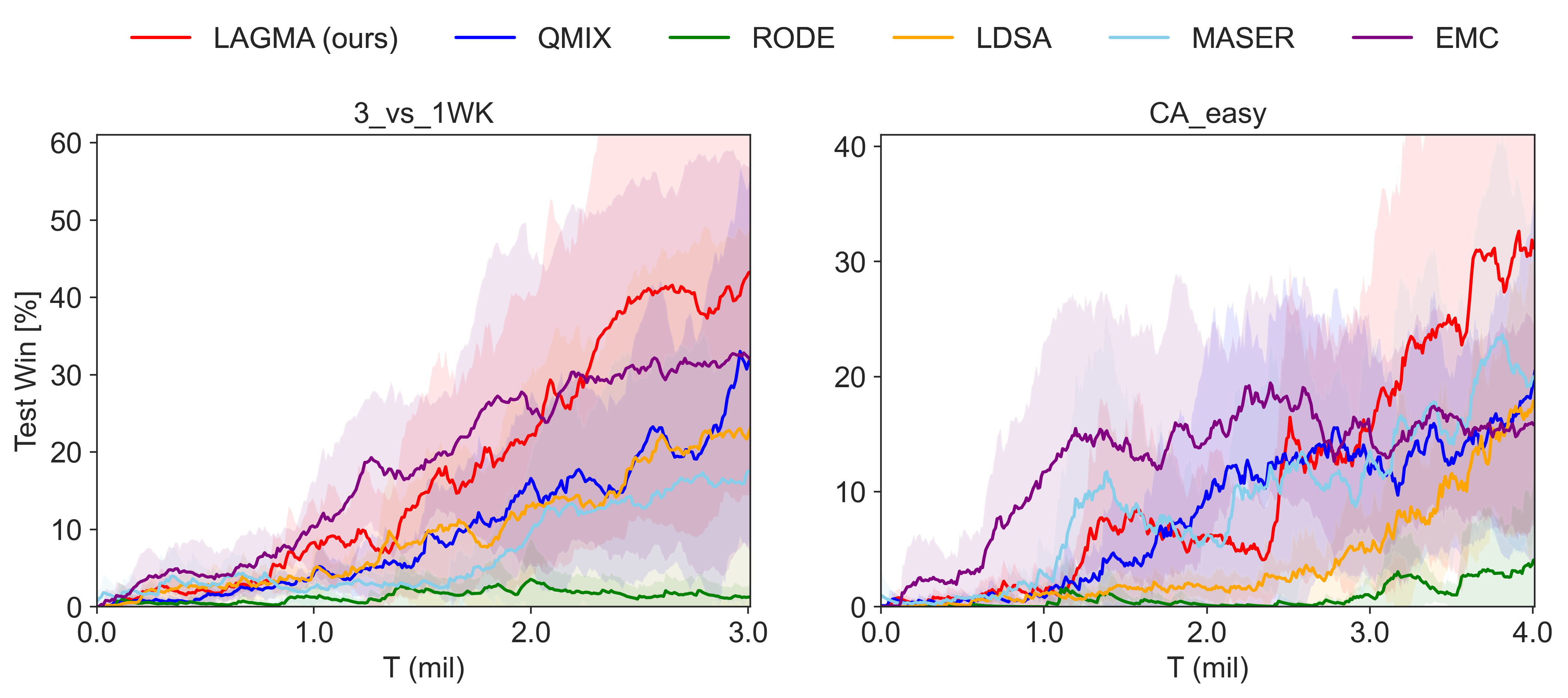}}
        \vspace{-0.3in}
    \end{center}
    \caption{Performance comparison of LAGMA against baseline algorithms on two GRF maps: \texttt{3\_vs\_1WK} and \texttt{CounterAttack(CA)\_easy}. (Sparse reward setting)}
    \label{fig:smac_performance_sprase}
    \vspace{-0.15in}
\end{figure}

\begin{figure*}[!hbt]
    %\vspace{-0.05in}
    \begin{center}
        \centerline{\includegraphics[width=0.8\linewidth]{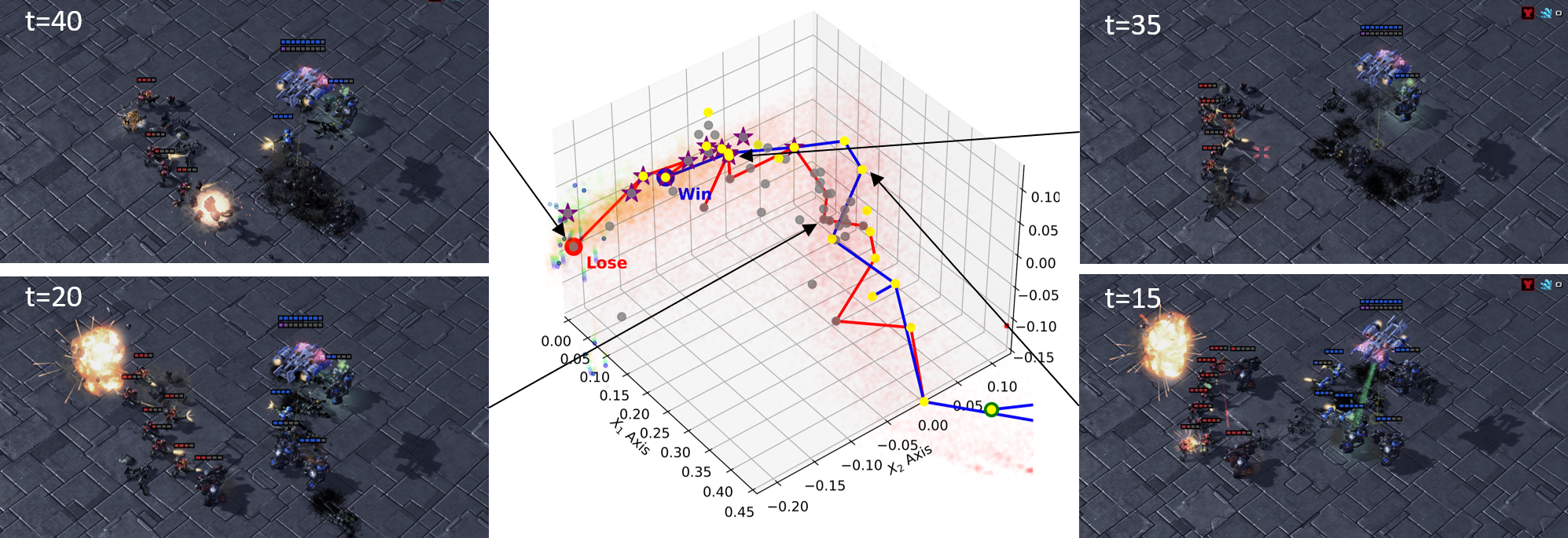}}
        \vspace{-0.2in}
    \end{center}
    \caption{Qualitative analysis on SMAC \texttt{MMM2} (red teams are RL-agents). Purple stars represent quantized embeddings of goal states in replay buffer $\mathcal{D}$. Yellow dots indicate the quantized embeddings in a sampled goal-reaching trajectory starting from an initial state denoted by a green dot. Gray dots and transparent dots are the same as Figure \ref{fig:vqvae_embedding}. Blue and red dots indicate terminal embeddings of two trajectories, respectively.}
    \label{fig:trj_qualitative}
    \vspace{-0.15in}
\end{figure*}	

\subsection{Performance evaluation on GRF}
Here, we conduct experiments on additional sparse reward tasks in GRF to compare LAGMA with baseline algorithms. For experiments, we do not utilize any additional algorithm for sample efficiency such as prioritized experience replay \cite{schaul2015universal} for all algorithms.

EMC \cite{zheng2021episodic} shows comparable performance by utilizing an episodic buffer, which benefits in generating a positive reward signal via additional episodic control term. However, LAGMA with a modified VQ codebook could guide a scoring policy without utilizing an additional episodic buffer as being required in EMC. Therefore, LAGMA achieves a similar or better performance with less memory requirement.

\subsection{Ablation study}
In this subsection, we conduct ablation studies to see the effect of the proposed embedding method and latent goal-guided incentive on overall performance. 
\begin{figure}[!b]
    \vskip -0.15in
    \begin{center}
        \centerline{\includegraphics[width=1.0\linewidth]{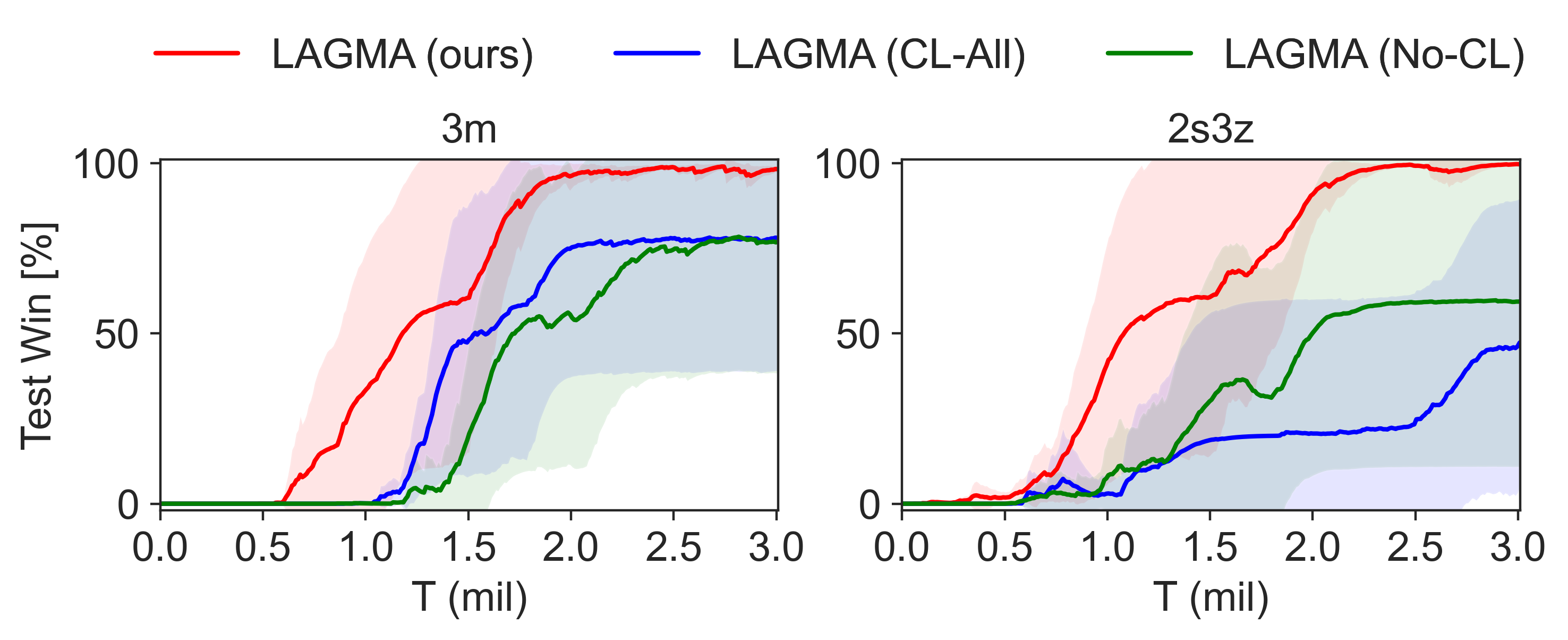}}
       \vspace{-0.3in}
    \end{center}
    \caption{Ablation study considering the coverage loss (CL) on four SMAC maps: \texttt{3m} and \texttt{2s3z}. (Sparse reward setting)}
    \label{fig:smac_ablation}
    \vspace{-0.1in}
\end{figure}	
\begin{figure}[!b]
    \vskip -0.1in
    \begin{center}
        \centerline{\includegraphics[width=1.0\linewidth]{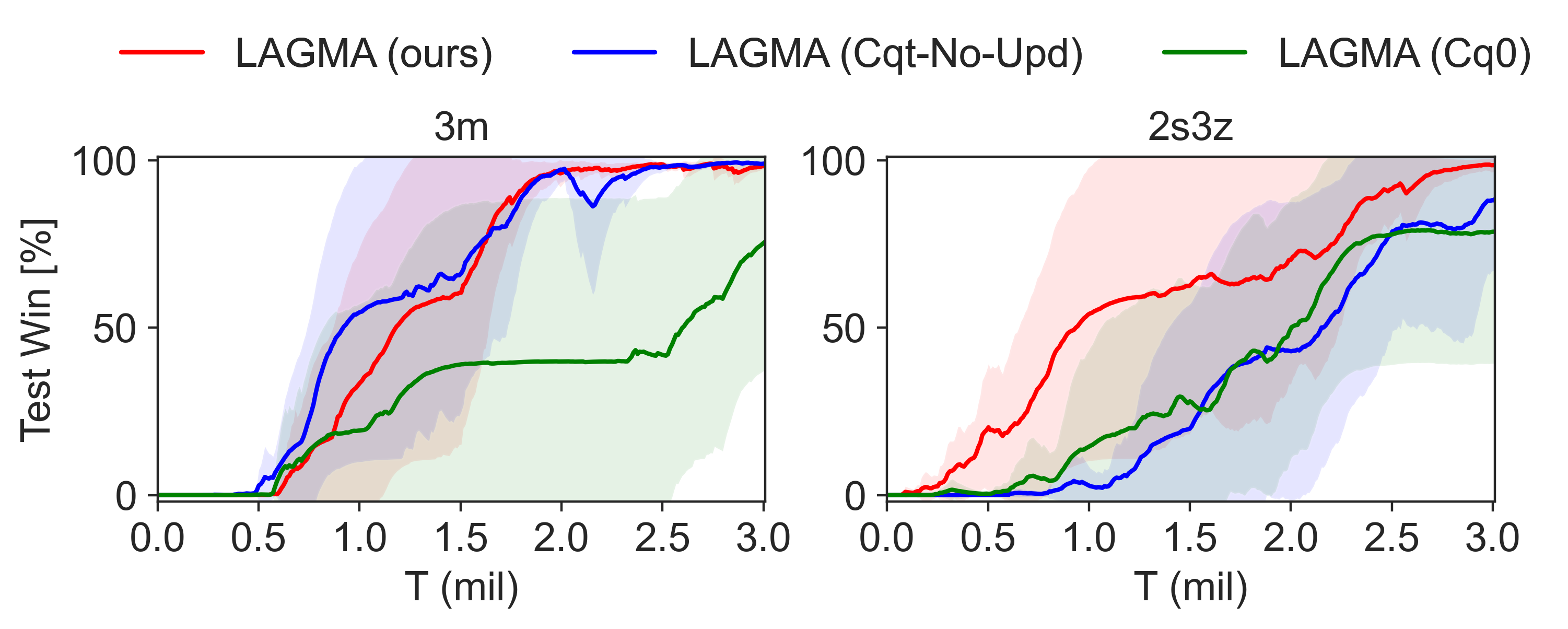}}
        \vspace{-0.3in}
    \end{center}
    \caption{Performance comparison of goal-guided incentive with other reward design choices on two SMAC maps: \texttt{3m} and \texttt{2s3z}. (Sparse reward setting)}
    \label{fig:smac_ablation_Cqt}
    %\vspace{-0.1in}
\end{figure}
We compare LAGMA (ours) with ablated configurations such as LAGMA (CL-All) trained with $\lambda_{\textrm{cvr}}^{\textrm{all}}$ considering all quantized vectors at each timestep and LAGMA (No-CL) trained without coverage loss.

Figure \ref{fig:smac_ablation} illustrates the effect of the proposed coverage loss in modified VQ-VAE on the overall performance. As shown in Fig. \ref{fig:smac_ablation}, the performance decreases when the model is trained without coverage loss or trained with $\lambda_{\textrm{cvr}}^{\textrm{all}}$ instead of $\lambda_{\textrm{cvr}}$. The results imply that, without the proposed coverage loss, quantized latent vectors may not cover $\chi$ properly and thus $x_q$ can hardly represent the projected states. As a result, a goal-reaching trajectory that consists of a few quantized vectors yields no incentive signal in most transitions.
	
In addition, we conduct an ablation study on reward design. We consider a sum of undiscounted rewards, $C_{q0}=\Sigma_{t=0}^{T-1} r_t$, for trajectory value estimation instead of $C_{q,t}$, denoted as \textbf{LAMGA (Cq0)}. We also consider the LAMGA configuration with goal-reaching trajectory generation only at the initial state denoted by \textbf{LAMGA (Cqt-No-Upd)}. Figure \ref{fig:smac_ablation_Cqt} illustrates the results.
Figure \ref{fig:smac_ablation_Cqt} implies that the reward design of $C_{q,t}$ shows a more stable performance than both \textbf{LAMGA (Cq0)} and \textbf{LAMGA (Cqt-No-Upd)}.

\subsection{Qualitative analysis} 
In this section, we conduct a qualitative analysis to observe how the states in an episode are projected onto quantized vector space and receive latent goal-guided incentive compared to goal-reaching trajectory sampled from $\mathcal{D}_{seq}$.
Figure \ref{fig:trj_qualitative} illustrates the quantized embedding sequences of two trajectories: one denoted by a blue line representing a battle-won trajectory and the other denoted by a red line representing a losing trajectory. In Fig. \ref{fig:trj_qualitative}, a losing trajectory initially followed the optimal sequence denoted by yellow dots but began to bifurcate at $t=20$ by losing Medivac and two more allies. Although the losing trajectory still passed through goal-reaching sequences during an episode, it ultimately reached a terminal state without a chance to defeat the enemies at $t=40$, as indicated by the absence of a purple star.
On the other hand, a trajectory that achieved victory followed the goal-reaching path and reached a goal state at the end, as indicated by purple stars. Since only transitions toward the sequences on the goal-reaching path are incentivized, LAGMA can efficiently learn a goal-reaching policy, i.e., the optimal policy in cooperative MARL.

\section{Conclusions}
\label{sec:conclusions}
This paper presents LAGMA, a framework to generate a goal-reaching trajectory in latent space and a latent goal-guided incentive to achieve a common goal in cooperative MARL. Thanks to the quantized embedding space, the experience of semantically similar states is shared by states projected onto the same quantized vector, yielding efficient training. The proposed latent goal-guided intrinsic reward encourages transitions toward a goal-reaching trajectory. 
%LAGMA introduces three major components: 1) latent space construction via modified VQ-VAE to properly cover the overall latent space projected by feasible state space, 2) goal-reaching trajectory generation, and 3) latent goal-guided intrinsic reward generation to generate more accurate TD-target construction yielding a better convergence on optimal policy. 
Experiments and ablation studies validate the effectiveness of LAGMA. 

\section*{Acknowledgements}
This research was supported by AI Technology Development for Commonsense Extraction, Reasoning, and Inference from Heterogeneous Data (IITP) funded by the Ministry of Science and ICT (2022-0-00077).

\section*{Impact Statement}
This paper primarily focuses on advancing the field of Machine Learning through multi-agent reinforcement learning. While there could be various potential societal consequences of our work, none of which we believe must be specifically highlighted here.
\nocite{langley00}

\bibliography{lagma}
\bibliographystyle{icml2024}

%%%%%%%%%%%%%%%%%%%%%%%%%%%%%%%%%%%%%%%%%%%%%%%%%%%%%%%%%%%%%%%%%%%%%%%%%%%%%%%
%%%%%%%%%%%%%%%%%%%%%%%%%%%%%%%%%%%%%%%%%%%%%%%%%%%%%%%%%%%%%%%%%%%%%%%%%%%%%%%
% APPENDIX
%%%%%%%%%%%%%%%%%%%%%%%%%%%%%%%%%%%%%%%%%%%%%%%%%%%%%%%%%%%%%%%%%%%%%%%%%%%%%%%
%%%%%%%%%%%%%%%%%%%%%%%%%%%%%%%%%%%%%%%%%%%%%%%%%%%%%%%%%%%%%%%%%%%%%%%%%%%%%%%
\newpage
\appendix
\onecolumn

\section{Mathematical Proof}
\label{sec:proof_proposition1}
Here, we provide the omitted proof of Proposition \ref{proposition_1}.
\begin{proof} 
Let $y=r(s,\boldsymbol{a}) + \gamma V_{{\theta ^ - }}$ be the current TD-target with the target network parameterized with $\theta^-$. Adding an intrinsic reward $r^I(s')$ to $y$ yields $y'=y+r^I(s')$. Now, we need to check whether $y'$ accurately estimates $y^*=r+\gamma V^*(s')$.
\begin{equation}\label{Eq:proof_proposition1}
    \begin{aligned}
        \mathbb{E}[y']&=\mathbb{E}[r(s,\boldsymbol{a}) + \gamma V_{{\theta ^ - }}+r^I(s')] \\ 
        &=\mathbb{E}[r(s,\boldsymbol{a}) + \gamma \textrm{max}_{\boldsymbol{a'}}Q_{{\theta^-}}(s',\boldsymbol{a'})+\gamma(C_{q,t}(s')-\textrm{max}_{\boldsymbol{a'}}Q_{{\theta^-}}(s',\boldsymbol{a'}))] \\ 
        &=\mathbb{E}[r(s,\boldsymbol{a}) + \gamma(C_{q,t}(s'))]\\
        &=\mathbb{E}[r(s,\boldsymbol{a}) + \gamma (\mathbb{E}[\sum_{i=t+1}^{T-1}\gamma^{i-(t+1)} r_i])]\\
        &=r(s,\boldsymbol{a}) + \gamma \mathbb{E}[r_{t+1}+\gamma r_{t+2} + \cdots + \gamma^{T-t-2}r_{T-1}]\\
        &=r(s,\boldsymbol{a}) + \gamma V^*(s')\\
    \end{aligned}    
\end{equation}
The last equality in Eq. \eqref{Eq:proof_proposition1} holds since $s'$ is on a goal-reaching trajectory, i.e., $s' \in \tau_{\chi_0}^{*}$ whose return is maximized, and $\mathbb{E}[r_{t+1}+\gamma r_{t+2} + \cdots$] is an unbiased Monte-Carlo estimate of $V^*(s')$.
\end{proof}

\section{Experiment Details}
\label{app:details_experiments}
In this section, we present details of SMAC \cite{samvelyan2019starcraft} and GRF \cite{kurach2020google}, and we also list hyperparemeter settings of LAGMA for each task.
Tables \ref{tab:details of SMAC} and \ref{tab:details of GRF} present the dimensions of state and action spaces and the maximum episodic length.

\begin{table}[htbp]
		\centering
		\caption{Dimension of the state space and the action space and the episodic length of SMAC}
		\vspace{0.3cm}
		%\label{tab:details of SMAC}
		\begin{tabular}{cccc}
			\toprule
			Task & Dimension of state space & Dimension of action space & Episodic length\\
            \midrule 
            3m & 48 & 9 & 60\\
            8m & 168 & 14 & 120\\
            2s3z & 120 & 11 & 120\\
            2m\_vs\_1z & 26 & 7 & 150\\
            1c3s5z & 270 & 15 & 180\\
            %3s5z & 216 & 14 & 150\\
            %3s\_vs\_5z & 68 & 11 & 250\\
            5m\_vs\_6m & 98 & 12 & 70\\
            MMM2 & 322 & 18 & 180\\
            6h\_vs\_8z & 140 & 14 & 150\\
            %3s5z\_vs\_3s6z & 230 & 15 & 170\\
            %corridor & 282 & 30 & 400\\
			\bottomrule
		\end{tabular}\label{tab:details of SMAC}
\end{table}

\begin{table}[htbp]
    \centering
    \caption{Dimension of the state space and the action space and the episodic length of GRF}
    \vspace{0.3cm}
    %\label{tab:details of GRF}
    \begin{tabular}{cccc}
        \toprule
            Task & Dimension of state space & Dimension of action space & Episodic length\\
            \midrule
            3\_vs\_1WK & 26 & 19 & 150\\
            CA\_easy & 30 & 19 & 150\\
            %CA\_hard & 34 & 19 & 150\\
        \bottomrule
    \end{tabular}\label{tab:details of GRF}
\end{table} 

In addition, Table \ref{tab:hyper_settings} presents the task-dependent hyperparameter settings for all experiments. As seen from Table \ref{tab:hyper_settings}, we used similar hyperparameters across various tasks. For an update interval $n_{\textrm{freq}}$ in Algorithm \ref{alg:Goal_traj}, we use the same value $n_{\textrm{freq}}=5$ for all experiments. $\epsilon_T$ represents annealing time for exploration rate of $\epsilon$-greedy, from 1.0 to 0.05. 

After some parametric studies, adjusting hyperparameter for VQ-VAE training such as  $n_{\textrm{freq}}^{cd}$ and $n_{\textrm{freq}}^{vq}$, instead of varying $\lambda$ values listed as $\lambda_{\textrm{vq}}$, $\lambda_{\textrm{commit}}$, and $\lambda_{\textrm{cvr}}$, provides more efficient way of searching parametric space. 
Thus, we primarily adjust $n_{\textrm{freq}}^{cd}$ and $n_{\textrm{freq}}^{vq}$ according to tasks, while keeping the ratio between $\lambda$ values the same. 

For hyperparameter settings, we recommend the efficient bounds for each hyperparameter based on our experiments as follows:
\begin{itemize}
    \item Number of codebook, $n_c$: 64-512
    \item Update interval for VQ-VAE, $n_{\textrm{freq}}^{vq}$: 10-40 (under $n_{\textrm{freq}}^{cd}=10$)
    \item Update interval for extended codebook, $n_{\textrm{freq}}^{cd}$: 10-40 (under $n_{\textrm{freq}}^{vq}=10$)
    \item Number of reference trajectory, $k$: 10-30             
    \item Scale factor of coverage loss, $\lambda_{\textrm{cvr}}$: 0.25-1.0 (under $\lambda_{\textrm{vq}}=1.0$ and $\lambda_{\textrm{commit}}=0.5$)            
\end{itemize}

Note that larger values of $n_c$ and $k$, and smaller values of $n_{\textrm{freq}}^{vq}$ and $n_{\textrm{freq}}^{cd}$ will increase the computational cost. 
%Instead of considering a variety choice of $\lambda$, we consider the fixed ratio values and change the hyperparameters for learning VQ-VAE, such as $n_{\textrm{freq}}^{vq}$ and $n_{\textrm{freq}}^{cd}$.

\begin{table}[!htbp]
    \centering
    \caption{Hyperparameter settings for experiments.}
    \vspace{0.1cm}
    \label{tab:hyper_settings}
\adjustbox{max width=\textwidth}{
\begin{tabular}{cccccccccc}
\toprule
\multicolumn{2}{c}{task}                         &  $n_c$ & $D$ & $\lambda_{\textrm{vq}}$ & $\lambda_{\textrm{commit}}$ 
            & $\lambda_{\textrm{cvr}}$ & $\epsilon_T$ & $n_{\textrm{freq}}^{cd}$ & $n_{\textrm{freq}}^{vq}$\\
            \midrule
\multirow{4}{*}{SMAC (sparse)} & 3m         & 256     & 8          & 2.0        & 1.0            & 1.0         & 50K  & 10  & 40    \\
                               & 8m         & 256     & 8          & 2.0        & 1.0            & 1.0         & 50K  & 20  & 10    \\
                               & 2s3z       & 256     & 8          & 2.0        & 1.0            & 1.0         & 50K  & 10  & 40    \\
                               & 2m\_vs\_1z & 256     & 8          & 2.0        & 1.0            & 1.0         & 500K & 20  & 10    \\
                               \midrule
\multirow{4}{*}{SMAC (dense)}  & 1c3s5z     & 64      & 8          & 1.0        & 0.5            & 0.5         & 50K  & 40  & 10    \\
                               & 5m\_vs\_6m & 64      & 8          & 1.0        & 0.5            & 0.5         & 50K  & 40  & 10    \\
                               & MMM2       & 64      & 8          & 1.0        & 0.5            & 0.5         & 50K  & 40  & 10    \\
                               & 6h\_vs\_8z & 256     & 8          & 2.0        & 1.0            & 1.0         & 500K & 40  & 10    \\
                               \midrule
\multirow{2}{*}{GRF (sparse)}  & 3\_vs\_1WK & 256     & 8          & 2.0        & 1.0            & 1.0         & 50K & 20  & 10    \\
                               & CA\_easy   & 256     & 8          & 2.0        & 1.0            & 1.0         & 50K & 10  & 20    \\
                               \bottomrule
\end{tabular}
}
\end{table}

Table \ref{tab:smac_sparse_reward} presents the reward settings for SMAC (sparse) which follows the sparse reward settings from \cite{jeon2022maser}.

\begin{table}[!htbp]
    \centering
    \caption{Reward settings for SMAC (sparse)}
    \vspace{0.1cm}
    \label{tab:smac_sparse_reward}
\begin{tabular}{lcl}
\toprule
\multicolumn{1}{c}{Condition} & Sparse reward &  \\
\midrule
All enemies die (Win)         & +200           &  \\
Each enemy dies               & +10            &  \\
Each ally dies                & -5            &  \\
\bottomrule
\end{tabular}
\end{table}

%and training for the longest experiment such as \texttt{corridor} via EMU (CDS) took less than 18 hours. Note that when training is conducted with $n_\textrm{circle}=2$, it takes more than one and a half times longer. Training encoder/decoder structure and updating $\gD_{E}$ with updated $f_{\phi}$ together only took less than 2 seconds at most in \texttt{corridor} task. As we update $f_{\phi}$ and $f_{\psi}$ periodically with $t_{emb}$, the additional time required for a trainable embedder is certainly negligible compared to MARL training.

\newpage
\section{Additional Related Works}
\label{app:additional_related}

\textbf{Goal-conditioned RL vs. Subtask conditioned MARL} \;
In a single agent case, Goal-conditioned RL (GCRL) which aims to solve multiple tasks to reach given target-goals has been widely adopted in various tasks including tasks with a sparse reward \cite{kaelbling1993learning,schaul2015universal,andrychowicz2017hindsight}.
GCRL often utilizes the given goal as an additional input to action policy in addition to states \cite{schaul2015universal}. Especially, goal-conditioned hierarchical reinforcement learning (HRL) \cite{kulkarni2016hierarchical,zhang2020generating,chane2021goal} adopts hierarchical policy structure where an upper-tier policy determines subgoal or landmark and a lower-tier policy takes action based on both state and selected a subgoal or landmark.

As one technique, reaching to subgoals generates a reward signal via hindsight experience replay \cite{andrychowicz2017hindsight}, and thus the goal-conditioned policy learn policy to reach the final goal with the help of these intermediate signals. Thus, many researchers \cite{nasiriany2019planning,zhang2020generating,chane2021goal,kim2023imitating,lee2023cqm} have studied on how to generate intermediate subgoals to reach final goals.
%A required timestep to reach a certain goal is estimated by Q-function \cite{} and the shortest path is obtained by utilizing the Q-values. However, in multi-agent RL settings, we cannot generate such planning utilizing global information during execution. Thus, such operation is limited to a centralized training. 

In the field of MARL, a subtask\cite{yang2022ldsa}, role\cite{wang2020roma,wang2020rode} or skill\cite{yang2019hierarchical,liu2022heterogeneous} conditioned policy adopted in a hierarchical MARL structure has a structural commonality with a goal-conditioned RL in that lower-tier policy network use designated subtask by the upper-tier network as an additional input when determining individual action. In MARL tasks, such subtasks, roles, or skills are a bit different from subgoals in GCRL, as they are adopted to decompose action space for efficient training or for subtask-dependent coordination. Another major difference is that in a general MARL task, the final goal is not defined explicitly unlike a goal-conditioned RL. MASER \cite{jeon2022maser} adopts the subgoal generation scheme from goal-conditioned RL when it generates an intrinsic reward based on the Euclidean distance between actionable representations \cite{ghosh2018learning} of the current and subgoal observation. %but it does not utilize the generated subgoal as an additional input to policy network rather use these subgoals to make an intrinsic reward. Instead, MASER uses actionable representation to generate individual intrinsic reward as a guidance to each agent's subgoal. The magnitude of incentive generated by Euclidean distance between actionable representations of the current and subgoal observation. 
However, this signal does not guarantee the consistency with learning signal for the joint Q-function. In contrast to MASER, we adopt a latent goal-guided incentive during a centralized training phase based on whether visiting on the promising subgoals or goals in the latent space. Also, the generated incentive by LAGMA theoretically guarantees a better TD-target, yielding better convergence on the optimal policy.

\section{Structure of Extended VQ Codebook}
\label{app:details_codebook}

To compute $C_{q,t}$ via a moving average, data is stored in a FIFO (First in, First Out) style to the codebook $\mathcal{D}_{VQ}$, similar to a replay buffer $\mathcal{D}$. After computing $C_{q,t}(x_{q,t})$ with the current $R_t$, we update the value of $C_{q,t}(x_{q,t})$ in $\mathcal{D}_{VQ}$ as $\mathcal{D}_{VQ}^{z_t}.C_{q,t}\leftarrow C_{q,t}(x_{q,t})$ where $z_t$ is an index of a quantized vector $x_{q,t}$. 

\begin{figure*}[htb]
\vskip 0.2in
\begin{center}
\centerline{\includegraphics[width=\linewidth]{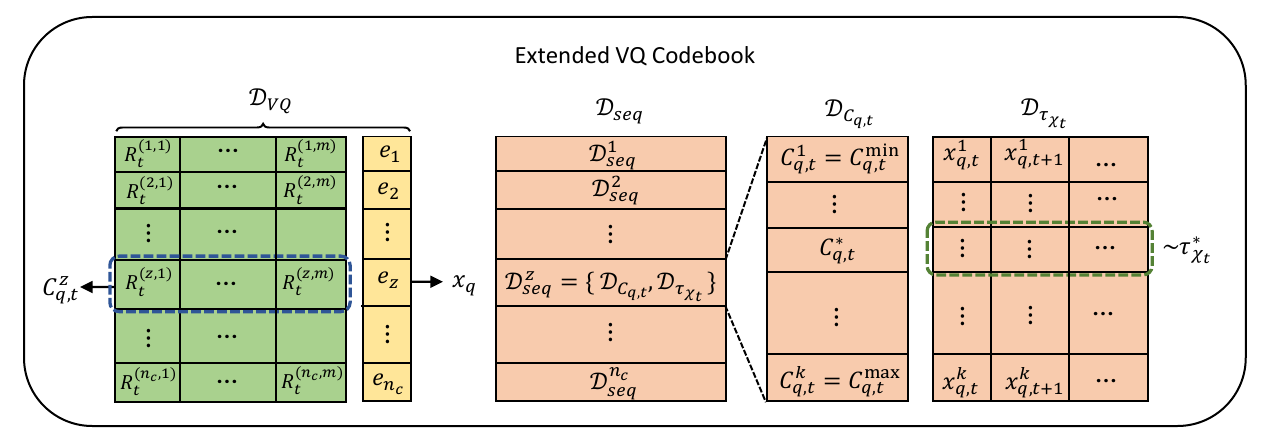}}
\caption{VQ codebook structure.}
\label{fig:VQ_codebook_structure}
\end{center}
\vskip -0.2in
\end{figure*}

\begin{algorithm}[!hbt]
   \caption{Update Sequence Buffer $\mathcal{D}_{seq}$}
   \label{alg:Update_seq}
\begin{algorithmic}[1] 
   \STATE \textcolor{gray}{$\mathcal{D}_{seq}$ keep top $k$ trajectory sequences based on their $C_{q,t}$}
   \STATE {\bfseries Input:} A total reward sum $R_{t}$ and a sequence $\tau_{\chi_t}$
   \STATE Get an initial index $z_t \leftarrow \tau_{\chi_t}[0]$ 
   \STATE Get $\mathcal{D}_{C_{q,t}}, \mathcal{D}_{\tau_{\chi_t}}$ from $\mathcal{D}_{seq}^{z_t}$   
   \IF{$|\mathcal{D}_{C_{q,t}}| < k$} 
   \STATE \texttt{heap\_push}($\mathcal{D}_{C_{q,t}},\mathcal{D}_{\tau_{\chi_t}},C_{q,t},\tau_{\chi_t}$)
   \ELSE{}   
   \STATE $C_{q,t}^\textrm{min} \leftarrow \mathcal{D}_{C_{q,t}}[0]$
   \IF{$R_{t} > C_{q,t}^\textrm{min}$}
   \STATE \texttt{heap\_replace}($\mathcal{D}_{C_{q,t}},\mathcal{D}_{\tau_{\chi_t}},R_{t},\tau_{\chi_t}$)
   \ENDIF
   \ENDIF      
\end{algorithmic}
\end{algorithm}

In Algorithm \ref{alg:Update_seq}, \texttt{heap\_push} and \texttt{heap\_replace} adopt the conventional heap space management rule, with a computational complexity of $\mathcal{O}(\textrm{log}{k})$. The difference is that we additionally store the sequence information in $\mathcal{D}_{\tau_{\chi_t}}$ according to their $C_{q,t}$ values.

\newpage
\section{Implementation Details}
\label{app:details_implementation}
In this section, we present further details of the implementation for LAGMA. Algorithm \ref{alg:time_indexing} presents the pseudo-code for a timestep dependent indexing $\mathcal{J}(t)$ used in Eq. 
\eqref{Eq:coverage_loss}. The purpose of a timestep dependent indexing $\mathcal{J}(t)$ is to distribute the quantized vectors throughout an episode. Thus, Algorithm \ref{alg:time_indexing} tries to uniformly distribute quantized vectors according to the maximum batch time of an episode. By considering the maximum batch time of each episode, Algorithm \ref{alg:time_indexing} can adaptively distribute quantized vectors.

\begin{algorithm}[!hbt]
   \caption{Compute $\mathcal{J}(t)$}
   \label{alg:time_indexing}
\begin{algorithmic}[1]    
   %\STATE \textcolor{gray}{$\mathcal{D}_{seq}$ keep top $k$ trajectory sequences based on their $C_{q,t}$}
   \STATE {\bfseries Input:} For given the maximum batch time $T$, the number of codebook $n_c$, and the current timestep $t$
   %\FOR{$t = 0$ \textbf{to} $t_\textrm{max}$}
   
   \IF{$t==0$} 
       \STATE $d=\lfloor n_c/T \rfloor $ %\textcolor{gray}{\Comment{quotient}}
       \STATE $r=n_c\; \texttt{mod}\;T$ %\textcolor{gray}{\Comment{remainder}}
       \STATE $i_s = d_n \times T$
       \STATE Keep the values of $d,r,i_s$ until the end of the episode
   \ENDIF       
   \IF{$d \geq 1$} 
       \STATE $\mathcal{J}(t)=d\times t:1:d\times(t+1)$
       \IF{$t < r$} 
            \STATE Append $\mathcal{J}(t)$ with $i_s+t$ 
       \ENDIF
   \ELSE
        \STATE $\mathcal{J}(t)=\lfloor t\times n_c/T \rfloor $
   \ENDIF
   
   %\STATE compute $\lambda_{\textrm{cvr}}$ in Eq. \eqref{Eq:coverage_loss} with $\mathcal{J}(t)$
   \STATE return $\mathcal{J}(t)$
   %\ENDFOR
\end{algorithmic}
\end{algorithm}

For given the maximum batch time $t_\textrm{max}$ and the number of codebook $n_c$, \texttt{Line\#4} and \texttt{Line\#5} in Algorithm \ref{alg:time_indexing} compute the quotient and remainder, respectively.
\texttt{Line\#8} compute an array with increasing order starting from the index $d \times t$ to $d \times (t+1)$. \texttt{Line\#10} additionally designate the remaining quantized vectors to the early time of an episode.

Algorithm \ref{alg:train_VQVAE} presents training algorithm to update encoder $f_{\phi}$, decoder $f_{\psi}$, and quantized embeddings $\boldsymbol{e}$ in VQ-VAE. In Algorithm \ref{alg:train_VQVAE}, we also present a separate update for $\mathcal{D}_{VQ}$, which estimates the value of each quantized vector in VQ-VAE.
In addition, the overall training algorithm including training for VQ-VAE is presented in Algorithm \ref{alg:training_lagma}. 

\begin{algorithm}[!hbt]
   \caption{Training algorithm for VQ-VAE and $\mathcal{D}_{VQ}$}
   \label{alg:train_VQVAE}
\begin{algorithmic}[0] 
   \STATE {\bfseries Parameter:} learning rate $\alpha$ and batch-size $B$
   \STATE {\bfseries Input:} $B$ sample trajectories $[\mathcal{T}]_{i=1}^{B}$ from replay buffer $\mathcal{D}$, the current episode number $n_{\textrm{epi}}$, an update interval $n_{\textrm{freq}}^{vq}$ for VQ-VAE and $n_{\textrm{freq}}^{cd}$ for $\mathcal{D}_{VQ}$ update interval.
   
   \FOR{$t=0$ {\bfseries to} $T$}
   \FOR{$i=1$ {\bfseries to} $B$}   
   \IF{\textrm{mod}$(n_{\textrm{epi}},n_{\textrm{freq}}^{cd})$}   
   \STATE Get $R_t^i$ and update $\mathcal{D}_{VQ}$ with Eq. \eqref{Eq:Cqt}.
   \ENDIF
   \IF{\textrm{mod}$(n_{\textrm{epi}},n_{\textrm{freq}}^{vq})$}   
   \STATE Get current state $s_t\sim[\mathcal{T}]_{i=1}$ and compute $\mathcal{J}(t)$ via Algorithm \ref{alg:time_indexing}
   \STATE Compute $\mathcal{L}_{VQ}^{tot}$ via Eq. \eqref{Eq:overall_obj_vqvae} with $f_{\phi}$, $f_{\psi}$, and $\boldsymbol{e}$.
   \ENDIF
   \ENDFOR   
   \ENDFOR   
   \IF{\textrm{mod}$(n_{\textrm{epi}},n_{\textrm{freq}}^{vq})$}   
   \STATE Update $\phi\leftarrow \phi-\alpha \frac{\partial\mathcal{L}_{VQ}^{tot}}{\partial\phi}$,\;$\psi\leftarrow \psi-\alpha \frac{\partial\mathcal{L}_{VQ}^{tot}}{\partial\psi}$,\;$\boldsymbol{e} \leftarrow \boldsymbol{e}-\alpha \frac{\partial\mathcal{L}_{VQ}^{tot}}{\partial\boldsymbol{e}}$
   \ENDIF
\end{algorithmic}
\end{algorithm}

\begin{algorithm}[htbp]
		\begin{center}
			\begin{algorithmic}[1]		
                \STATE {\bfseries Parameter:}  Batch size $B$ and the maximum training time $T_{env}$
                \STATE {\bfseries Input:} $Q_{\theta}^i$ is individual Q-network of $n$ agents, replay buffer $\mathcal{D}$, extended VQ codebook $\mathcal{D}_{VQ}$, and sequence buffer $\mathcal{D}_{seq}$	
				\STATE Initialize network parameters $\theta$, $\phi$, $\psi$, $\boldsymbol{e}$
				\WHILE{$t_{env} \leq T_{env}$}
				\STATE Interact with the environment via $\epsilon$-greedy policy based on $[Q_{\theta}^i]_{i=1}^{n}$ and get a trajectory $\mathcal{T}$		
				\STATE Append $\mathcal{T}$ to  $\mathcal{D}$
				\STATE Get $B$ sample trajectories $[\mathcal{T}]_{i=1}^{B} \sim \mathcal{D}$ 
				\STATE For a given $[\mathcal{T}]_{i=1}^{B}$, run MARL training algorithm and Algorithm \ref{alg:Goal_traj} to update $\theta$ with Eq. \eqref{Eq:overall_loss}, and Algorithm \ref{alg:train_VQVAE} to update $\phi$, $\psi$, $\boldsymbol{e}$ with Eq. \eqref{Eq:overall_obj_vqvae}
								
				%\IF{$t_{env}$ mod $t_{emb}$ == 0}
				%\STATE Run Algorithm \ref{alg:embedder_training} to update $\phi$, $\psi$
				%\STATE Update all $x \in \mathcal{D}_{E}$ with updated $f_{\phi}$
				%\State Get $|\gD|$ sample trajectories $[\mathcal{T}]_{i=1}^{|\gD|} \sim \gD$ 
				%\State Run Algorithm \ref{alg:memory_construction} to update $\gD_{E}$ with $[\mathcal{T}]_{i=1}^{|\gD|}$				
				%\ENDIF
				\ENDWHILE
			\end{algorithmic}
		\end{center}
		\caption{Training algorithm for LAGMA.}
		\label{alg:training_lagma}
\end{algorithm}

  \newpage
  \section{Generalizability of LAGMA}
  \label{generalizability_test}
  \subsection{Policy robustness test} 
    To assess the robustness of policy learned by our model, we designed tasks with the same unit configuration but highly varied initial positions, ones that agents had not encountered during training, i.e., unseen maps. With these settings, opponent agents will also experience totally different relative positions and thus will make different decisions. We set the different initial positions for this evaluation as follows.

     \begin{figure*}[!hbt]
        \centering
        \includegraphics[width=0.5\linewidth]{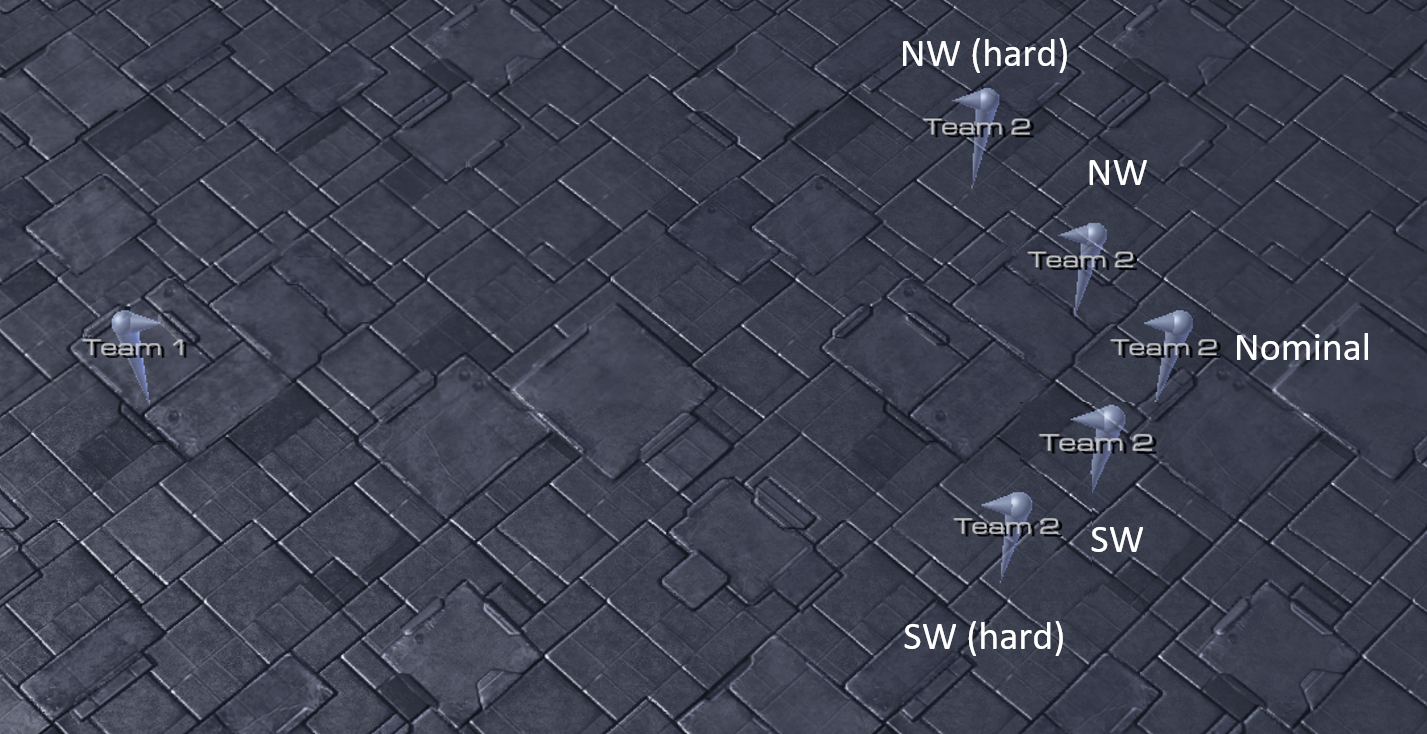}
        \caption{Problem settings for policy robustness test. Team 1 represents the initial position of RL agents, while Team 2 is the initial position of opponents.}
        \label{fig:robustenss_problem}
        %\vskip -0.15in
    \end{figure*}

    As illustrated in Figure \ref{fig:robustenss_problem}, the initial position of each task is significantly moved from the nominal position experienced during the training phase.
    
    For the comparison, we conduct the same experiment for other baselines, such as QMIX \cite{rashid2018qmix} and LDSA \cite{yang2022ldsa}. The model by each algorithm is trained for $T_{env}=1M$ in nominal MMM2 map (denoted as Nominal) and then evaluated under various problem settings, such as NW(hard), NW, SW, and SW(hard). Each evaluation is conducted for 5 different seeds with 32 tests and Table \ref{tab:robustness_test} shows the mean and variance of winrate of each case.
    
    \begin{table}[!htbp]
    \centering
    \caption{Policy robustness test on SMAC MMM2 (super hard). All models are trained for $T_{env}=1M$. The percentage (\%) in parentheses represents the ratio compared to a nominal mean value.}
    \adjustbox{max width=\textwidth}{
    \begin{tabular}{lccccc}
    \toprule
    \multicolumn{1}{c}{} & NW(hard) & NW    & Nominal & SW    & SW(hard) \\ 
    \midrule
    LAGMA                & 0.275 $\pm$ 0.064 (28.2\%)   & 0.500 $\pm$ 0.104 (51.3\%) & 0.975 $\pm$ 0.026   & 0.556 $\pm$ 0.051 (57.1\%) & 0.394 $\pm$ 0.042 (40.4\%) \\
    QMIX                 & 0.050 $\pm$ 0.036 (13.1\%)   & 0.138 $\pm$ 0.100 (36.1\%) & 0.381 $\pm$ 0.078   & 0.194 $\pm$ 0.092 (50.8\%) & 0.156 $\pm$ 0.058 (41.0\%) \\
    LDSA                 & 0.000 $\pm$ 0.000 ( 0.0\%)   & 0.081 $\pm$ 0.047 (18.3\%) & 0.444 $\pm$ 0.107   & 0.063 $\pm$ 0.049 (14.1\%) & 0.081 $\pm$ 0.028 (18.3\%) \\  
    \bottomrule
    \end{tabular}
    }
    \label{tab:robustness_test}
    \end{table}
    
    In Table \ref{tab:robustness_test}, LAGMA shows not only the best performance but also the robust performance in various problem settings. 
    The fast learning of LAGMA is attributed to the latent goal-guided incentive, which generates accurate TD-target by utilizing values of semantically similar states projected to the same quantized vector.   
    Because LAGMA utilizes the value of semantically similar states rather than the specific states when learning Q-values, different yet semantically similar states tend to have similar Q-values, yielding generalizable policies. In this manner, LAGMA would enable further exploration, rather than solely enforcing exploitation of an identified state trajectory. Thus, even though the transition toward a goal-reaching trajectory is encouraged during training, the policy learned by LAGMA does not overfit to specific trajectories and exhibits robustness to unseen maps. 
    %[1] Qmix: Monotonic value function factorisation for deep multi-agent reinforcement learning. In International conference on machine learning, pp. 4295–4304. PMLR, 2018.    
    %[2] Ldsa: Learning dynamic subtask assignment in cooperative multi-agent reinforcement learning. Advances in Neural Information Processing Systems, 35:1698–1710,2022.

    \subsection{Scalability test}
    LAGMA can be adopted to large-scale problems without any modifications. VQ-VAE takes a global state as an input to project them into quantized latent space. Thus, in large-scale problems, only the input size will differ from tasks with a small number of agents. In addition, many MARL tasks include high-dimensional global input size as presented in Table 1 in the manuscript. To assess the scalablity of LAGMA, we conduct additional experiments in 27m\_vs\_30m SMAC task, whose state dimension is 1170. Figure \ref{fig:large_scale} illustrates the performance of LAGMA. In Figure \ref{fig:large_scale}, LAGMA maintains efficient learning performance even when applied to large-scale problems, using identical hyperparameter settings as those for small-scale problems such as 5m\_vs\_6m.
     \begin{figure*}[!hbt]
        \centering
        \subfigure[Learning curve.]{\includegraphics[width=0.33\linewidth]{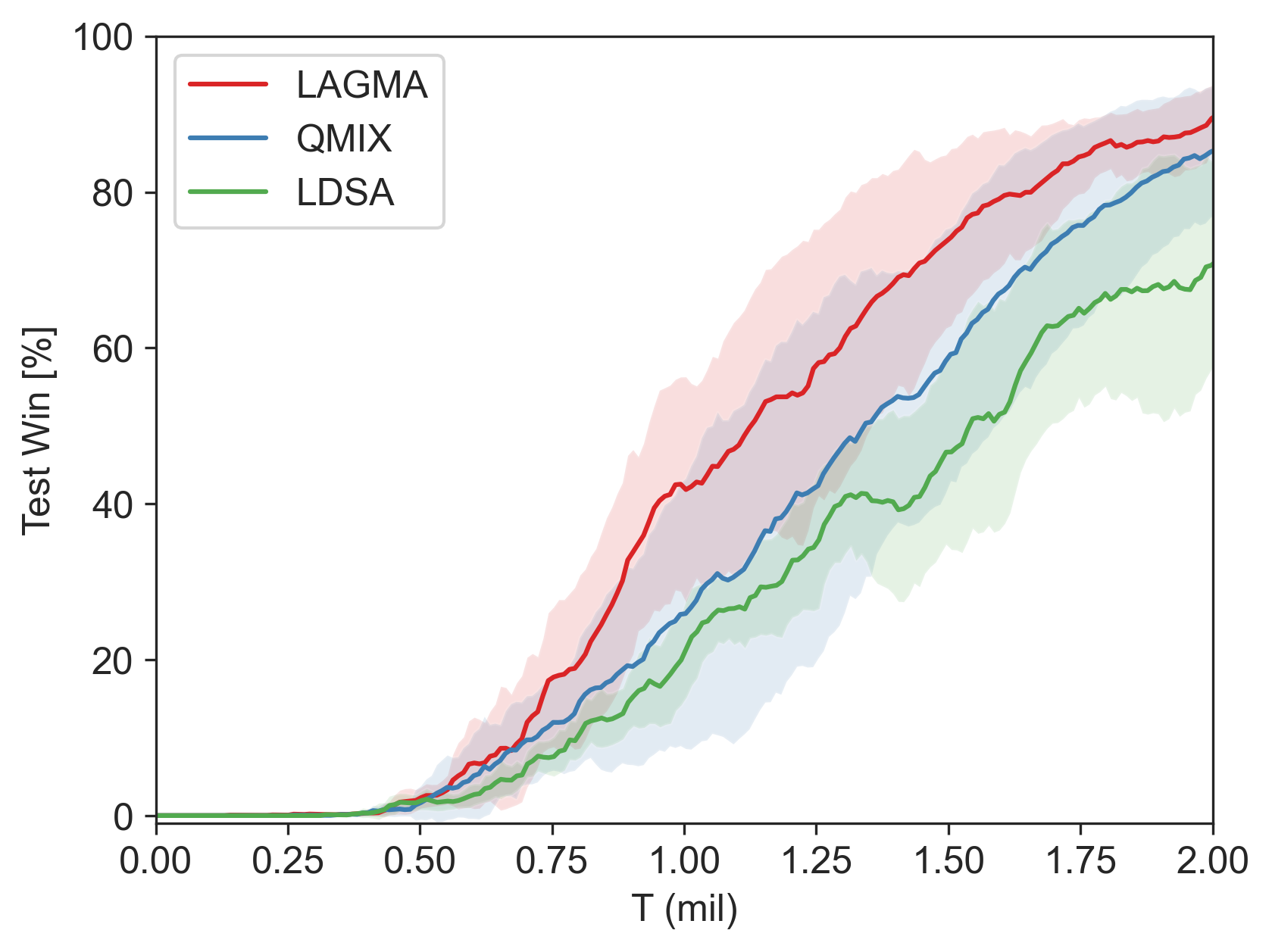}}    
        %\vskip -0.05in
        %\subfigure[Winrate curve.]{\includegraphics[width=0.325\linewidth]{}}   
        \caption{Performance on large-scale problem (27m\_vs\_30m SMAC).}
        \label{fig:large_scale}
        \vskip -0.15in
    \end{figure*}

    \subsection{Generalizability test to problems with diverse semantic goals} 
    To show the generalizability of LAGMA further, we conducted additional experiments on another benchmark such as SMACv2 \cite{ellis2024smacv2}, which includes diversity in initial positions and unit combinations within the identical task. Thus, from the perspective of latent space, SMACv2 tasks may encompass distinct multiple goals, even within the same task. For evaluation, we adopt the same hyperparameters as those for \texttt{3\_vs\_1WK} presented in Table 3 in the manuscript, except for $D=4$ and $n_{\textrm{freq}}^{cd}=40$.

    \begin{figure*}[!h]
        \centering
        \subfigure[\texttt{protoss\_5\_vs\_5}]{\includegraphics[width=0.32\linewidth]
        {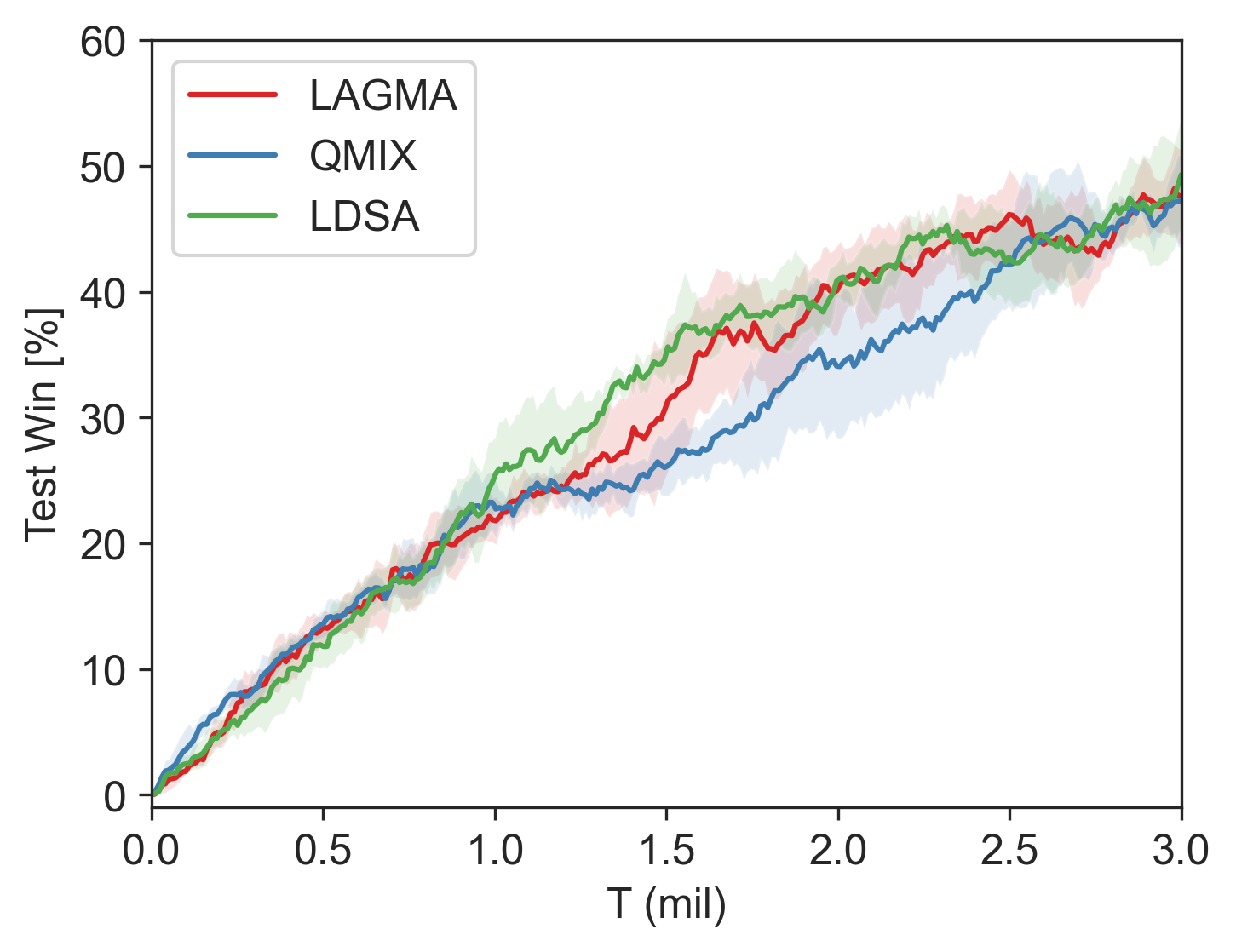}}    
        %\vskip -0.05in
        \subfigure[\texttt{protoss\_10\_vs\_11}]{\includegraphics[width=0.32\linewidth]{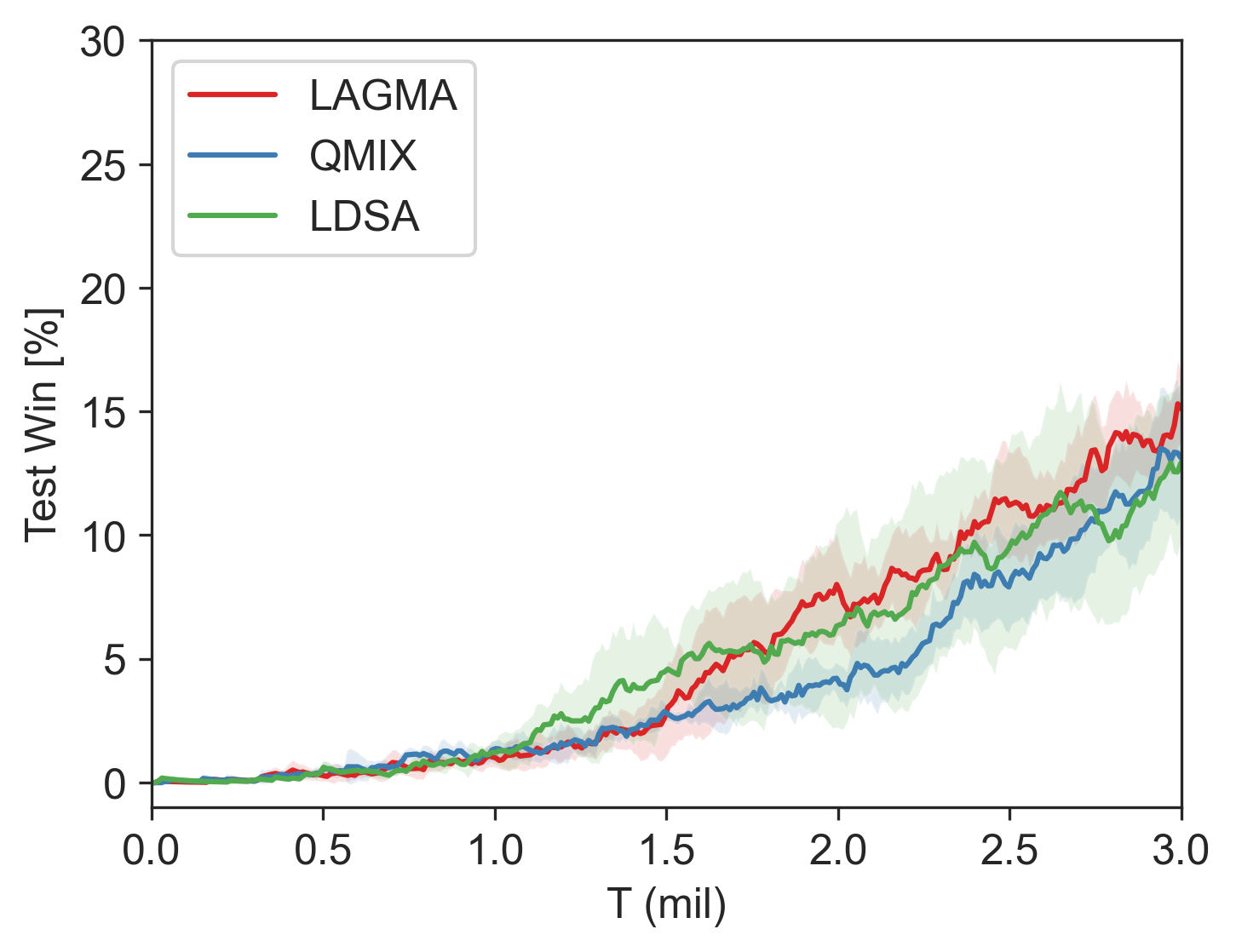}} 
        \caption{Performance evaluation on SMACv2.}
        \label{fig:smacv2}
        %\vskip -0.15in
    \end{figure*}

    In Figure \ref{fig:smacv2}, LAGMA shows comparable or better performance than baseline algorithms, but it does not exhibit distinctively strong performance, unlike other benchmark problems. We deem that this result stems from characteristics of the current LAGMA capturing reference trajectories towards a similar goal in the early training phase. 
   
    Multi-objective (or multiple goals) tasks may require a diverse reference trajectory generation. The current LAGMA only considers the return of a trajectory when storing reference trajectories in $\mathcal{D}_{seq}$. Thus, when trajectories toward different goals bifurcate from the same quantized vector, i.e., semantically similar states, they may not be captured by the current version of LAGMA algorithm if their return is relatively low compared to that of other reference trajectories already stored in $\mathcal{D}_{seq}$. Thus, LAGMA may not exhibit strong effectiveness in such tasks until various reference trajectories toward different goals are stored for a given quantized vector.   
    
    To improve, one may also consider the diversity of a trajectory when storing a reference trajectory in $\mathcal{D}_{seq}$. In addition, goal or strategy-dependent agent-wise execution would enhance coordination in such problem cases, but it may lead to delayed learning in easy tasks. The study regarding this trade-off would be an interesting direction for future research.

\newpage
    \section{Computational cost analysis}
    \label{computation_analysis}
    \subsection{Resource usage} 
    The introduction of an extended VQ codebook in LAGMA requires additional memory usage to an overall MARL framework. Memory usage depends on the codebook number ($n_c$), the number of a reference trajectory (index sequence) to save ($k$) in sequence buffer $\mathcal{D}_{seq}$, its batch time length ($T$), the total number of data saved for moving average computation ($m$), and data type. Memory usage of $\mathcal{D}_{\tau_{\chi_{t}}}$ and $\mathcal{D}_{VQ}$ are computed as follows.
    
    \begin{itemize}
        \item $\mathcal{D}_{\tau_{\chi_{t}}}: \texttt{byte(dtype)} \times n_c \times k \times T$
        \item $\mathcal{D}_{VQ}: \texttt{byte(dtype)} \times n_c \times m$        
    \end{itemize}
    
    %memory usage analysis : $\texttt{byte}(\texttt{int64}) \times n_c \times k \times T_{max}$
    For example, when $m=100$, $n_c=64, k=30$, $T=T_{max}=150$, i.e., the maximum timestep defined by the environment, and $\mathcal{D}_{\tau_{\chi_{t}}}$ and $\mathcal{D}_{VQ}$ use data type \texttt{int64} and \texttt{float32}, respectively, resource usages by introducing extended VQ codebook are computed as follows:
    
    \begin{itemize}
    \item $(\mathcal{D}_{\tau_{\chi_{t}}})_{\texttt{max}}$: $8(\texttt{int64})\times64\times30\times150=\texttt{2.19MiB}$    
    \item $\mathcal{D}_{VQ}$: $4(\texttt{float32})\times64\times100 = \texttt{25.6KiB}$
    \end{itemize}
    
    Here, $(\mathcal{D}_{\tau_{\chi_{t}}})_{\texttt{max}}$ value represents the possible maximum value and the actual value may vary based on the goal-reaching trajectory of each task. We can see that resource requirement due to the introduction of the extended codebook is marginal compared to that of the replay buffer and the GPU's memory capacity, such as $\texttt{24GiB}$ in GeForce RTX3090. Note that any of these memory usages do not depend on the dimension of states since only the index ($z$) of the quantized vector ($x_q$) of a sequence is stored in $\mathcal{D}_{\tau_{\chi_{t}}}$.

    \subsection{Training time analysis} 

%    \textbf{Infrastructure}
%All experiments for SMAC \cite{samvelyan2019starcraft} and GRF \cite{kurach2020google} are mainly carried out on NVIDIA GeForce RTX 3090 GPU and the training until $T_{env}=4M$ for one of longest experiments such as \texttt{6h\_vs\_8z} took less than 20 hours similar to EMC \cite{zheng2021episodic}.
    
    In LAGMA, we need to conduct an additional update for VQ-VAE and the extended codebook. Thus, the update frequency of VQ-VAE and the extended codebook would affect the overall training time. In the manuscript, we utilize the identical update interval $n_{\textrm{freq}}^{vq}=10$, indicating training once every 10 MARL training iterations for both VQ-VAE and codebook update. Table \ref{tab:training_time} represents the overall training time taken by various algorithms for diverse tasks. GeForce RTX3090 is used for \texttt{5m\_vs\_6m} and GeForce RTX4090 for \texttt{8m(sparse)} and \texttt{MMM2}. In the case of \texttt{8m(sparse)} task, the training time varies according to whether the learned model finds policy achieving a common goal. Thus, the training time of the successful case is presented for each algorithm.

    \begin{table}[!htbp]
    \centering
    \caption{Training time for each model in various SMAC maps (in hours).}
        \begin{tabular}{cccc}
        \toprule
        Model & \texttt{5m\_vs\_6m} (2M) & \texttt{8m(sparse)} (3M) & \texttt{MMM2} (3M) \\
        \midrule
        EMC   &  8.6            & 11.8            & 12.0     \\
        MASER & 12.7            & 13.4            & 20.5     \\
        LDSA  &  5.6            & 11.0            &  9.8     \\
        LAGMA & 10.5            & 12.6            & 17.7     \\
        \bottomrule
        \end{tabular}\label{tab:training_time}
    \end{table}

    %| Model | `5m_vs_6m` (2M) | `8m(sparse)` (3M) | `MMM2` (3M) |
    %|-------|------------------|-------------------|-------------|
    %| EMC   | 8.6              | 11.8              | 12.0        |
    %| MASER | 12.7             | 13.4              | 20.5        |
    %| LDSA  | 5.6              | 11.0              | 9.8         |
    %| LAGMA | 10.5             | 12.6              | 17.7        |
    
    Here, numbers in parenthesis represent the maximum training time ($T_{env}$) according to tasks. In Table \ref{tab:training_time}, we can see that training of LAGMA does not take much time compared to existing baseline algorithms. Therefore, we can conclude that the introduction of VQ-VAE and the extended codebook in LAGMA imposes an acceptable computational burden, with only marginal increases in resource requirements.

%%%%%%%%%%%%%%%%%%%%%%%%%%%%%%%%%%%%%%%%%%%%%%%%%%%%%%%%%%%%%%%%%%%%%%%%%%%%%%%
%%%%%%%%%%%%%%%%%%%%%%%%%%%%%%%%%%%%%%%%%%%%%%%%%%%%%%%%%%%%%%%%%%%%%%%%%%%%%%%

\end{document}